\newtheorem{theorem}{Theorem}
\theoremstyle{plain}
\newtheorem{example}{Example}
\newtheorem{remark}{Remark}
\numberwithin{equation}{section}
\begin{document}
\title[SAFIP algorithm]{SAFIP: A Streaming Algorithm for Inverse Problems}
\author{Maeva Biret$^{(1)}$}
\author{Michel Broniatowski$^{(2,\ast )}$}
\curraddr{$^{(1)}$Safran Aircraft Engine, Moissy-Cramayel}
\curraddr{$^{(2)}$Universit\'{e} Pierre et Marie Curie, Paris}
\email{$^{(\ast )}$ Corresponding author: michel.broniatowski@upmc.fr}
\date{May 26th, 2016}
\keywords{level set, chains, inverse problem, convergence}

\begin{abstract}
This paper presents a new algorithm which aims at the resolution of inverse problems of the form $f(x)=0$, for $x\in\mathbb{R}^d$ and $f$ an arbitrary function with mild regularity condition. The set of solutions $S$ may be infinite. This algorithm produces a good coverage of $S$, with a limited number of evaluations of the function $f$. It is therefore appropriate for complex problems where those evaluations are costly. Various examples are presented, with $d$ varying from 2 to 10. Proofs of convergence and of coverage of S are presented. 
\end{abstract}

\maketitle

\section{Introduction}
\label{SECTION INTRO} 
\subsection{The scope of this paper}
Assume that we are given a bounded and closed domain $D\subset\mathbb{R}^d$, and a continuous real-valued function $f$ defined on $D$.\\
The aim of this paper is to present an algorithm for the solution of the problem
\begin{equation}\label{EQ pb}
S=\lbrace x\in D : f(x)=0\rbrace,
\end{equation}
assuming $S\not =\emptyset$.\\
Such problems have been extensively handled over the years; see \cite{Nakamura}. The difficulty which we are confronted to lies in three main points : 
\begin{enumerate}
\item the set $S$ may contain many points, even be infinite,
\item the function $f$ might be quite costly for example when defined by a simulation device,
\item the function $f$ may be quite irregular; we will assume mild regularity in the neighborhood of any point in $S$, only.
\end{enumerate}
We also provide a two-fold proof for the convergence of this algorithm, namely we first prove that any resulting sequence of points in $D$ converges to some point in $S$, and secondly that any point $x$ in $S$ is reached asymptotically by some "good" sequence, which is a sequence starting in a suitable neighborhood of $x$. As usually done in random search techniques, the starting points will be defined through random sampling in $D$.

\subsection{Bibliographic outlook}
Most approaches to Problem (\ref{EQ pb}) extensively use analytic properties of the function $f$; dichotomy, false position, Newton, conjugate gradient, etc (see \cite{Suli}) handle so called well-posed problems, when the equation $f(x)=0$, for $x\in\mathbb{R}$ and $f$ a real-valued function, has a unique solution. The case where $f$ is defined as a mapping from $\mathbb{R}^d$ to $\mathbb{R}^k$ with $d\leq k$ is treated by singular value decomposition (see \cite{Golub}), which also solves well-posed problems.\\
The ill-posed problems which we consider, namely the case where Problem (\ref{EQ pb}) has multiple solutions, is usually handled through regularization techniques, which aim at transposing (\ref{EQ pb}) into a well-posed problem. This procedure produces a partial solution to (\ref{EQ pb}) under appropriate knowledge on the function $f$ (see \cite{Tikhonov}). All these techniques are out of the concern of the present work, where all solutions of $f$ are looked for, with minimal assumption on $f$. We briefly present four methods, which constitute the environment of our proposal.

\paragraph{\textit{Local multi-start optimization, a deterministic approach}}
Looking for the value of $x$ such that $f(x)=0$, consider the function $\vert f\vert$; minimizing $\vert f\vert$ indeed produces the set $S$.\\
First we choose a local optimization technique (Newton-Raphson for example). Then consider a design, which is a grid of initial points for the local optimization. From any of those, the sequence of iterations of the local optimization algorithm may produce a limit solution in $S$. Obviously stationary points not in $S$ may be produced. The initial design is of utmost importance and the method may be unstable in this respect. Furthermore the method may be very costly due to the numerous evaluations of $f$. A general reference for those methods is \cite{Gyorgy}.

\paragraph{\textit{A grid search, deterministic approach}}
This method produces a sequence of grids in $D$. Given an initial regular grid, the function $f$ is evaluated on each of its points. Points where $f$ is close to 0 are selected and the grid is updated and refined in the neighborhood of those points. This method has been proposed by \cite{Miller2}. A serious drawback lies in its cost, when the dimension of $D$ corresponds to real life cases. Furthermore, the stopping rule of such algorithms does not guarantee a uniform approximation of $S$.

\paragraph{\textit{A Monte Carlo Markov Chain technique}}
We assume that the function $f$ is written as $f(x)=g(x)+\epsilon$. $f$ is then a model for the real function $g$ with an error $\epsilon$ due to modelling. For example, $g$ is a physical model and $f$ a computer-based formula for $g$. We estimate $S=\lbrace x: g(x)=0\rbrace$. We choose a prior distribution $\Pi_0(x)$ on $\mathcal{X}$ and a parametric form for the distribution of $\epsilon$, $p(\epsilon\vert x)$, for fixed $x$. By Bayes formula, the a posteriori distribution of $x$ given $\epsilon$ is given by
\begin{equation}\label{EQ posterior}
\frac{p(\epsilon\vert x)\Pi_0(x)}{\int p(\epsilon\vert x)\Pi_0(x)dx}.
\end{equation}
The maximum probability principle provides stochastic solutions of $g(x)=0$ as the maximum of (\ref{EQ posterior}) upon $x$, given the prior $\Pi_0$.\\
In turn it can be proved that, whenever $\Pi_0(x)=\mathcal{N}(x_0, \sigma_0^2)$ the Gaussian distribution with mean $x_0$ and variance $\sigma_0^2$, for some $x_0\in D$ and $\sigma_0^2>0$, solutions $x^*$ of (\ref{EQ posterior}) can be written as
\begin{equation}\label{EQ MCMC}
x^*:=argmin_{x\in D}\vert\vert y-g(x)\vert\vert+\frac{\sigma_\epsilon^2}{\sigma_0^2}\vert\vert x-x_0\vert\vert^2,
\end{equation}
when $\epsilon$ is assumed to follow $\mathcal{N}(0, \sigma_\epsilon^2)$.\\
In order to find the $x^*$ solution of (\ref{EQ MCMC}), MCMC routines are used. This method is described in \cite{Gelfand}.

\paragraph{\textit{The MRM (Monotonous Reliability Method)}}
Assume that $f: \mathbb{R}^d\rightarrow\mathbb{R}$ is a globally monotone, i. e. is monotone in each of its variables. Assume also that the set $S$ of solutions of the equation $f(x)=0$ is a continuous and simply (or one) connected set.\\
Assume for example that $f$ is increasing on each of its variables. At each step, choose one point $x$ in the unexplored subset of $D$. When $f(x)>0$ then all points $y>x$ (meaning $y_i>x_i$ for all $1\leq i\leq d$) are discarded from the unexplored region. \\
In the same way, when $f(x)<0$, discard all the regions $\lbrace y: y<x\rbrace$.\\
Iteration of these steps produces an unexplored domain which shrinks to $S$.\\
Various ways of choosing $x$ in the unexplored domain define specific algorithms. See \cite{Biret}.

\section{Outlook of the SAFIP algorithm}
\label{SECTION OUTLOOK}

\subsection{Basic features and properties}\label{sec algo}
We start with the iteration of the equivalence
\begin{equation}\label{EQ d=1}
(f(x)=0)\iff\left(f(x)+\frac{x}{2k}+\frac{x}{2k}=\frac{x}{k}\right),
\end{equation}
which holds where $d=1$, for any $k\not=0$; for sake of convenience state $k>0$.\\
We proceed defining a recurrence in the RHS in (\ref{EQ d=1}), namely define a sequence $(z_i)_{i\in\mathbb{N}}$ with $z_i\in D$ and such that
\begin{equation}\label{EQ recur}
z_{i+1}=z_i+\frac{z_{i-1}-z_i}{2}+kf(z_i).
\end{equation}
Defining
\begin{equation}\label{EQ dist}
R_i=\vert z_i-z_{i-1}\vert,
\end{equation}
we obtain from (\ref{EQ recur})
\begin{equation}\label{EQ ineg dist}
R_{i+1}\leq\frac{R_i}{2}+k\vert f(z_i)\vert.
\end{equation}
When $d>1$, we may write
\[
R_i=\vert\vert z_i-z_{i-1}\vert\vert.
\]
Thus, any sequence $(z_i)$ which satisfies (\ref{EQ recur}) also satisfies (\ref{EQ ineg dist}). We define $R_0>0$ arbitrary.\\
We now propose to substitute (\ref{EQ recur}) by a random sequence $(z_i)$ which satisfies (\ref{EQ ineg dist}). Also some additional conditions on $(z_i)$ will be imposed. We will thus be able to prove the convergence of the resulting sequence $(z_i)$ to some point in $S$; reciprocally, for any $x$ in $S$, when $z_0$ is close enough to $x$, the limit point of $(z_i)$ will coincides with $x$.\\
Define $z_0$ and $z_1$ uniformly in $D$ and $R_1=\vert\vert z_1-z_0\vert\vert$.\\
For $i\geq 1$ compare $f(z_i)$ and $f(z_{i-1})$. Let $C\in\lbrack\frac{1}{2}, 1\rbrack$. If 
\begin{equation}\label{EQ decr f}
\vert f(z_i)\vert\leq C\vert f(z_{i-1})\vert,
\end{equation}
then obtain $z_{i+1}$ by
\begin{equation}\label{EQ rec}
z_{i+1}:=z_i+u_i,
\end{equation}
where $u_i$ is randomly drawn on $\mathcal{B}\left(\underline{0}, \frac{R_i}{2}+k\vert f(z_i)\vert\right)$, where $\mathcal{B}(\omega, r)$ is the ball with center $\omega$ and radius $r$.\\
If (\ref{EQ decr f}) is not fulfilled then the sequence $(z_j)_{j\in\mathbb{N}}$ stops. Draw then $z_0$ and $z_1$ again.\\
At this point we state
\begin{theorem}\label{thm1}
Any infinite sequence $(z_i)$ defined as above converges a. s. with limit in $S$.\\
\end{theorem}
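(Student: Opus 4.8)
The plan is to show that the sequence $(R_i)$ of successive distances is summable almost surely, which forces $(z_i)$ to be Cauchy and hence convergent in the closed bounded set $D$; the limit then lies in $S$ by continuity of $f$, provided we can show $f(z_i)\to 0$.

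First I would unwind the recursion for the radii. Since $u_i$ is drawn in $\mathcal{B}(\underline 0,\frac{R_i}{2}+k|f(z_i)|)$, we have the deterministic bound $R_{i+1}=\|u_i\|\leq \frac{R_i}{2}+k|f(z_i)|$, i.e. inequality (\ref{EQ ineg dist}) holds along every infinite run. The key point is that, on an infinite run, condition (\ref{EQ decr f}) holds at every step $i\geq 1$, so $|f(z_i)|\leq C|f(z_{i-1})|\leq\cdots\leq C^{i-1}|f(z_1)|$ with $C\in[\tfrac12,1)$; the case $C=1$ must be addressed separately (I expect the intended reading is $C<1$, or else an extra argument is needed — see below). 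Thus $|f(z_i)|\to 0$ geometrically, which already pins down that any limit point of $(z_i)$ is a zero of $f$. Next I would feed this geometric decay back into (\ref{EQ ineg dist}): iterating, $R_{i+1}\leq \frac{R_1}{2^i}+k\sum_{j=1}^{i}2^{-(i-j)}|f(z_j)|$, and since $|f(z_j)|\leq C^{j-1}|f(z_1)|$ the convolution of two geometric sequences is again geometric (bounded by $i\cdot\rho^i$ if $C=\tfrac12$, or by a constant times $\max(C,\tfrac12)^i$ otherwise), so $\sum_i R_i<\infty$. Consequently $\|z_{i+p}-z_i\|\leq\sum_{j\geq i}R_j\to 0$, so $(z_i)$ is Cauchy, converges to some $z^\ast\in D$ (closedness of $D$), and $f(z^\ast)=\lim f(z_i)=0$, i.e. $z^\ast\in S$.

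The only place randomness enters is the draw of $u_i$; since the bound $R_{i+1}\le\frac{R_i}{2}+k|f(z_i)|$ is almost sure (indeed sure) regardless of where in the ball $u_i$ lands, the convergence is deterministic on the event that the run is infinite, and a fortiori almost sure — the "a.s." in the statement really refers to the outer loop of restarts, i.e. to which run we end up following. I would make this explicit: condition on the event $\{$the run never triggers a restart$\}$; on that event the above estimates hold pathwise.

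The main obstacle is the boundary case $C=1$. If (\ref{EQ decr f}) only gives $|f(z_i)|\le|f(z_{i-1})|$ we get no geometric decay and cannot conclude summability of $R_i$, nor even that $f(z_i)\to0$; one could have $|f(z_i)|$ constant and $R_i\to$ a positive limit. So either the theorem tacitly assumes $C\in[\tfrac12,1)$, or there is an additional mechanism (perhaps the radius $\frac{R_i}{2}+k|f(z_i)|$ shrinking forces $f(z_i)$ down through some geometric-in-expectation argument on the random draws) that I would need to locate in the hypotheses. Assuming $C<1$, the proof is the routine two-geometric-series estimate sketched above; I would organize it as (i) geometric decay of $|f(z_i)|$, (ii) summability of $R_i$ via (\ref{EQ ineg dist}), (iii) Cauchy $\Rightarrow$ convergent in $D$, (iv) continuity $\Rightarrow$ limit in $S$.
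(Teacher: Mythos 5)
Your proposal follows essentially the same route as the paper's own proof: geometric decay of $\vert f(z_i)\vert$ from (\ref{EQ decr f}), iteration of (\ref{EQ ineg dist}) to get a convolution of two geometric sequences bounding $R_i$, summability, the Cauchy criterion, and continuity of $f$ to place the limit in $S$. Your concern about the endpoint $C=1$ is well founded but does not signal a missing idea on your part: the paper's proof itself tacitly assumes $1/2<C<1$ (it uses both $2C-1>0$ and $C<1$ in the estimates), so the stated interval $[\tfrac12,1]$ should be read as open at least at $1$; your treatment of $C=\tfrac12$ via the $i\rho^i$ bound is in fact slightly more careful than the paper's.
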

We now add a number of conditions on the function $f$ which entail that any point in $S$ is reached asymptotically.\\
Let $x\in S$ and set $z_0\in\mathcal{B}(x, \epsilon_0)=\lbrace z: \vert\vert z-x\vert\vert\leq\epsilon_0\rbrace$ for some $\epsilon_0>0$. Define further
\begin{equation}\label{EQ_E0}
E_0:=B\cap\lbrace z: \vert\vert z-z_0\vert\vert>k_1\vert f(z_0)\vert\rbrace,
\end{equation}
with $0<k_1<k$ and such that $k_1\vert f(z_0)\vert<2\epsilon_0$; $B$ is the ball with center $z_0$ and radius $\frac{R_0}{2}+k\vert f(z_0)\vert$. Therefore, $E_0$ is an annulus around $z_0$.\\
Let
\begin{equation}\label{EQ A1}
A_1=int\lbrace\mathcal{B}(x, \epsilon_0)\cap B\rbrace.
\end{equation}
By its very definition, the set $A_1$ is not void.\\
Assume that $f$ satisfies the following regularity conditions
\begin{enumerate}
\item\label{cond1} For all $x\in S$, there exists some $\epsilon_0(x)>0$ such that if $z_0, z_1\in\mathcal{B}(x,\epsilon_0)$ and $\vert\vert x-z_1\vert\vert\leq\vert\vert x-z_0\vert\vert$ then
\[
\lbrace z : \vert f(z)\vert\leq \vert f(z_1)\vert\rbrace\varsubsetneq\lbrace z : \vert f(z)\vert\leq \vert f(z_0)\vert\rbrace.
\]
\item\label{cond3} There exists $0<m<\frac{1}{4\epsilon_0}$ such that for all $x\in S$, for all $z_0\in\mathcal{B}(x, \epsilon_0)$ for all $0<\epsilon<k/2$, for all $z\in E_0\cap A_1$,
\[
\vert f(z_0)\vert-\vert f(z)\vert\geq m\vert\vert z-z_0\vert\vert.
\]
By condition (\ref{cond1}), the LHS in this inequality is non negative.
\end{enumerate}
We then have
\begin{theorem}\label{thm2}
Let $x\in S$ and $\epsilon_0>0$ such that (\ref{cond1}) and (\ref{cond3}) hold. When $z_0\in\mathcal{B}(x, \epsilon_0)$, the sequence $(z_i)$ is infinite and satisfies Theorem \ref{thm1}. Furthermore $\lim z_i=x$ a. s.
\end{theorem}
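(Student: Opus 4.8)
My plan is to split the statement into two parts: (i) the sequence $(z_i)$ is infinite, i.e. the stopping condition (\ref{EQ decr f}) never fails; and (ii) it converges to $x$ itself, not merely to some point of $S$. For (i), I would argue inductively that if $z_0,\dots,z_i$ all lie in $\mathcal{B}(x,\epsilon_0)$, then the next point $z_{i+1}=z_i+u_i$ stays in $\mathcal{B}(x,\epsilon_0)$ with the relevant control, and that $|f(z_{i+1})|\le C|f(z_i)|$ holds. The key is to use condition (\ref{cond3}): on the annulus-type region $E_0\cap A_1$ we have $|f(z_0)|-|f(z)|\ge m\|z-z_0\|$, which forces a genuine decrease of $|f|$ whenever the step $u_i$ is not too small; combined with the radius $\frac{R_i}{2}+k|f(z_i)|$ of the sampling ball shrinking (because $R_i$ halves and $|f(z_i)|$ decreases geometrically), this should pin the iterates inside $\mathcal{B}(x,\epsilon_0)$ for all $i$. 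One then invokes Theorem \ref{thm1} to get a.s. convergence to \emph{some} point of $S$, with the limit automatically lying in $\overline{\mathcal{B}(x,\epsilon_0)}$.

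For (ii), the idea is to let $\epsilon_0\to 0$ or, more precisely, to show the limit $\ell:=\lim z_i$ must equal $x$. Here I would use condition (\ref{cond1}): the nested strict-inclusion of sublevel sets $\{z:|f(z)|\le|f(z_1)|\}\varsubsetneq\{z:|f(z)|\le|f(z_0)|\}$ whenever $\|x-z_1\|\le\|x-z_0\|$ encodes that, near $x$, $|f|$ is strictly "radially monotone" around $x$ — so $|f(z)|=0$ with $z$ near $x$ forces $z=x$. Since $\ell\in S$, we have $f(\ell)=0=f(x)$; if $\ell\ne x$ the two sublevel sets at level $0$ would coincide (both equal to the local zero set), contradicting the strict inclusion unless $\|x-\ell\|$ is as small as possible, i.e. $\ell=x$. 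Making this rigorous requires showing the iterates' distances to $x$ are eventually monotone decreasing (or at least that $\liminf\|z_i-x\|=0$), which again I would extract from (\ref{cond3}): the quantitative decrease $|f(z_0)|-|f(z)|\ge m\|z-z_0\|$ applied along the chain, together with $|f(z_i)|\to 0$, controls $\sum\|z_{i+1}-z_i\|$ and hence the total displacement, keeping the trajectory from drifting away from $x$.

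The main obstacle I anticipate is part (i) — proving the sequence is genuinely infinite almost surely. The step $u_i$ is drawn uniformly (or at least randomly) in a ball of radius $\frac{R_i}{2}+k|f(z_i)|$ \emph{centered at $\underline 0$}, so $z_{i+1}$ can in principle land anywhere in $\mathcal{B}(z_i,\frac{R_i}{2}+k|f(z_i)|)$, including points where $|f|$ increases and (\ref{EQ decr f}) fails. So one cannot claim the stopping never happens deterministically; rather, one must show that conditionally on being inside $\mathcal{B}(x,\epsilon_0)$, the event $\{|f(z_{i+1})|\le C|f(z_i)|\}$ has probability bounded below — ideally bounded below by something summable-to-infinity so a Borel–Cantelli / martingale argument gives an infinite run with positive probability, and then a restart argument (drawing $z_0,z_1$ again) upgrades this to "a.s. eventually infinite". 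The delicate point is that the favorable region for $u_i$ (where $|f(z_i+u_i)|\le C|f(z_i)|$) must occupy a non-negligible fraction of the sampling ball; this is exactly where conditions (\ref{cond1}) and (\ref{cond3}) must be leveraged — (\ref{cond3}) guarantees a whole annular sector on which $|f|$ drops by at least $m\|u_i\|$, and one needs $m$ large enough relative to $C$ and the geometry (the constraint $m<\frac1{4\epsilon_0}$ and $C\in[\frac12,1]$ are presumably calibrated for this) so that this sector has volume a fixed proportion of the ball. Handling the interface between the "$z_{i+1}$ stays in $\mathcal B(x,\epsilon_0)$" requirement (set $A_1$) and the "$|f|$ decreases" requirement (region $E_0$) simultaneously — i.e. that $E_0\cap A_1$ is not only nonempty but has comparable measure to $B$ — is the crux of the argument.
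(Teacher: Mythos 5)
Your plan correctly identifies the ingredients the paper itself uses --- the nonemptiness of $E_0\cap A_1$ and the quantitative decrease from condition (\ref{cond3}), which gives $\vert f(z_1)\vert\le(1-mk_1)\vert f(z_0)\vert$ and hence (\ref{EQ decr f}) with $C=1-mk_1$ --- but the proposal stops exactly where the proof has to be carried out, and you say so yourself. Two claims that constitute the theorem are left open. (a) You do not show that the favorable event $\{\vert f(z_i+u_i)\vert\le C\vert f(z_i)\vert\}$ has probability bounded below, which is what your own strategy requires for the chain to be infinite; observing that the constants $m<\frac{1}{4\epsilon_0}$ and $C\in[\frac12,1]$ ``are presumably calibrated for this'' and that the volume comparison ``is the crux'' is a diagnosis, not an argument. (b) Your route to $\lim z_i=x$ via condition (\ref{cond1}) is not sound as stated: from $f(\ell)=f(x)=0$ and the strict nesting of sublevel sets one cannot conclude $\ell=x$ unless you first establish that $\Vert z_i-x\Vert$ is eventually decreasing, or that the iterates are trapped in balls centred at $x$ with radii tending to $0$; no mechanism producing such a trapping is supplied, and controlling $\sum\Vert z_{i+1}-z_i\Vert$ only bounds the total drift, it does not aim it at $x$.

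For comparison, the paper's proof is existential rather than probabilistic: it shows $E_0\cap A_1\ne\emptyset$, shows that \emph{every} point of that set satisfies (\ref{EQ decr f}), and then iterates, constructing balls $\mathcal{B}(x,\epsilon_i)$ with decreasing radii and concluding $x^*=x$ by contradiction. It never addresses the probability that the uniformly drawn $u_i$ actually lands in the favorable set --- precisely the obstacle you single out. So your concern is legitimate, and arguably sharper than the published argument, but your proposal does not resolve it either; as written it is an outline in which the two decisive steps are acknowledged rather than proved.
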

In order to cover all $S$ by the limiting points of such sequences we also propose to add a step where we randomly select $p$ points uniformly in $D$. These points are initial points of new sequences; this allows to obtain a good covering of $S$ by the limits of all these sequences.\\
Obviously this latest step does not substitute the entire algorithm; clearly a hudge number of such points will approximate $S$ from the start, the most inefficient Monte-Carlo random search method.\\
\newline The stopping rule is defined through the definition of an accuracy index call $tol$. Define $N$ the number of points to be reached in $S$. We may decide to stop the algorithm when $N$ sequences $(z_i)$ are such that the extremities are in $S$ up to the accuracy, denoted $tol$ in the sequel.

\subsection{Enhanced algorithm}\label{sec algo}
In order to improve the coverage of $S$, keeping the same set of points $z_0$, we propose to modify the choice of $z_{i+1}$ as given in (\ref{EQ decr f}) and (\ref{EQ rec}) as follows. From $z_0, \ldots, z_i$ we build indeed $i$ chains, each one starting from $z_j, 1\leq j\leq i$. Obviously the sequence starting at $z_i$ is as described above; the new $i-1$ ones spread and develop in all directions. Any of these chains inherit of the properties mentioned in Theorem \ref{thm1}. Also, any $x$ in $S$ is asymptotically reached by one of those sequences, as $i$ increases.\\
The sequences defined by an algorithm may be finite; indeed condition (\ref{EQ decr f}) may not hold for $(z_{i-1}, z_i)$ and therefore $z_{i+1}$ cannot be simulated. Thus no point $z_{i+1}$ will be simulated since his father would be higher than his grandfather.\\
However his grandfather $z_{i-1}$ is indeed lower than his grand-grandfather; therefore his grandfather may have offspring. This grandfather is the root of a new generation, hence a new $z_i$ which may satisfy (\ref{EQ decr f}). In the same way all ancestors of $z_{i-1}$ satisfy (\ref{EQ decr f}) and are eligible for fatherhood.\\
We call a step of the algorithm the generation of all the offspring of the eligible points in the existing population of points. Such a step is followed by the generation of $p$ uniformly distributed points in $D$ as done in the basic algorithm.\\
\newline In the sequel, we focus on the basic algorithm described in Section \ref{sec algo}.

\subsection{Reducing the computational cost tuning the parameters}
Firstly this algorithm makes use of very few parameters. Furthermore those can be tuned easily according to the complexity of the problem at hand. Indeed these parameters can be interpreted in connection with the computational burden. In some cases the function $f$ is very costly and running an algorithm for a long time, without evaluating $f$ often, may be of great advantage. Sometimes the function $f$ is easy to calculate and the need is to get a quick description of $S$. Tuning $k, C$ and $m$, together with the number of initiating points, makes use for those choices.
\newline The following examples illustrate the role of each of the parameters, all the other ones being kept fixed.\\
The number of solutions which we require in the tolerance zone around $S$ is fixed to 1000, but in the last example where the algorithm is evaluated with respect to this number.\\
Examples are presented in dimension 2. Higher dimension examples are presented in Section \ref{sec dim+}.
Red points are couples $(x_1, x_2)$ such that $f(x_1, x_2)>0$. Points with negative values of $f$ are blue. Black points are all blue or red ones whose $f$ value belongs to $[-tol, tol]$.\\
Each example is summarized by three indicators. The first one is the runtime. The second one is the efficiency coefficient (EC) which is the ratio between the total number of evaluations of $f$ and the number of solutions, which equals 1000 in all but the last example. This indicator is a measure of the number of calls to $f$ which are required in order to obtain one solution to the equation $f(x)=0$. The third indicator is of visual nature; in all those examples which are in dimension 2, the quality of the coverage of $S$ can be considered qualitatively.

\begin{remark}
The most important indicator is EC, since in all industrial applications, what really matters is the cost in evaluating $f$.
\end{remark}

\hspace{1pt}
\paragraph{\textit{The initialization step}}
Call $n$ the number of initiating points $z_0$, randomly selected on $D$. This is the initial cost of the method since the function $f$ will be evaluated $n$ times. Due to section \ref{sec algo}, $n$ should not be too large.
\begin{example}
Let $f$ be a bivariate function defined by
\[
(x_1, x_2)\mapsto f(x_1, x_2)=x_1^2+x_2^2-0.5
\]
The aim is to find $N=1000$ pairs $(x_1, x_2)$ such that $\vert f(x_1, x_2)\vert\leq tol$ where $tol$ is the accuracy. All parameters but $n$ are fixed. The tolerance is 0.01; the value of $C$ is fixed being 0.75; the value of $k$ is 1; the number $p$ of supplementary points at each step of the algorithm is 1.\\
The solutions are close to $S=\lbrace(x_1, x_2), f(x_1, x_2)=0\rbrace$, the circle with center $(0, 0)$ and radius $\sqrt{0.5}$. In Figure \ref{fonction_quad}(a), the function $f$ is intersected by the horizontal plane $z=0$. The Figure \ref{fonction_quad}(b) represents the intersection in the variables frame. The circle is then clearly visible.
\begin{figure}[!h]
\centering
\includegraphics[scale=0.4]{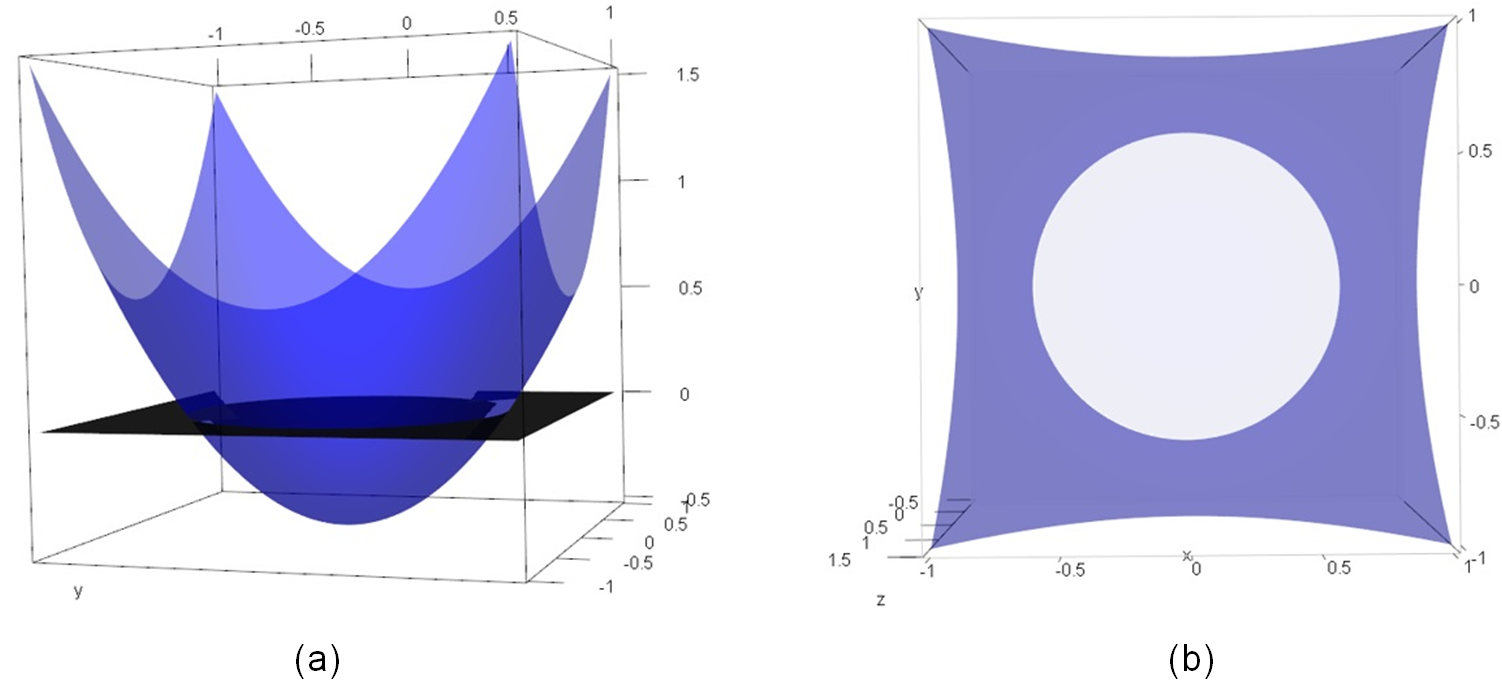}
\caption{Representations of the quadratic function}
\label{fonction_quad}
\end{figure}
In Figures \ref{ex_n_init} (a), (b), (c), we have considered respectively $n=5$, $n=100$ and $n=300$.
\begin{figure}[!h]
\centering
\includegraphics[scale=0.5]{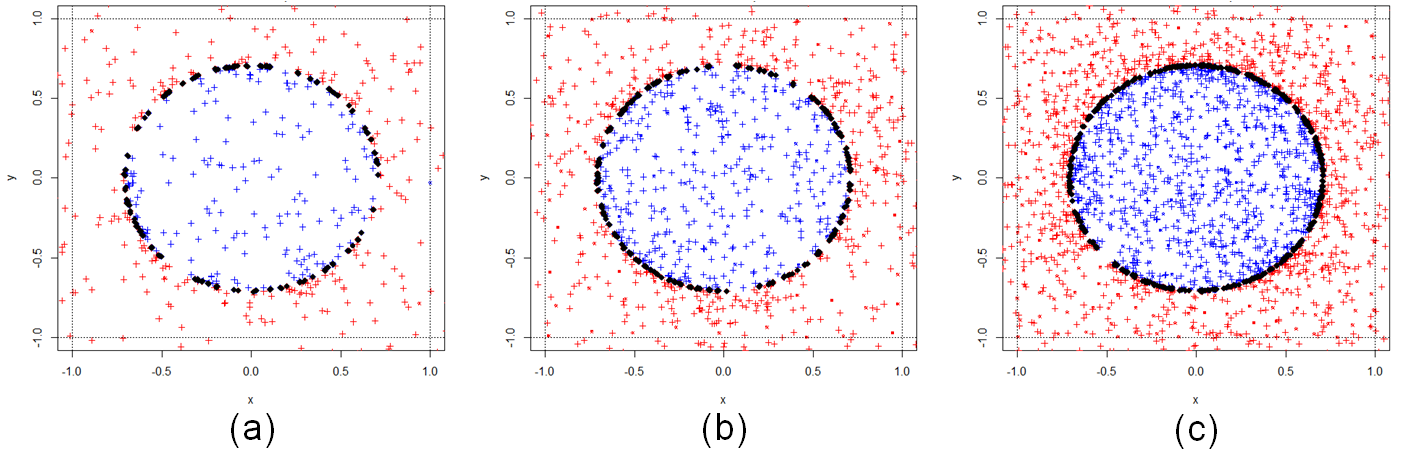}
\caption{Solving quadratic equation using SAFIP for three values of $n$}
\label{ex_n_init}
\end{figure}
Clearly the more numerous the initial points, the more the number of chains, and therefore the more numerous the points where the function $f$ is evaluated; so the algorithm is costly as $n$ increases. At the contrary, the better the coverage of $S$. Results are gathered in Table \ref{table_ex1}.
\begin{table}
\begin{center}
\begin{tabular}{|c|c|c|c|c|c|c|c|c|}
\hline
\textbf{n} & \textbf{tol} & \textbf{N} & \textbf{C} & \textbf{k} & \textbf{p} & \textbf{Time} & \textbf{EC} & \textbf{Coverage} \tabularnewline
\hline
5 & 0.01 & 1000 & 0.75 & 1 & 1 & 0.32s & 4.33 & - \tabularnewline
\hline
100 & 0.01 & 1000 & 0.75 & 1 & 1 & 0.60s & 6.32 & + \tabularnewline
\hline
300 & 0.01 & 1000 & 0.75 & 1 & 1 & 1.54s & 9.14 & ++ \tabularnewline
\hline
\end{tabular}
\end{center}
\caption{Results for Example 1 with different values of $n$}
\label{table_ex1}
\end{table}
\end{example}

\hspace{1pt}
\paragraph{\textit{The rate of convergence}}
The value of $C$ pertains to the rate of convergence of the algorithm. Assume $C$ small ($C$ close to 1/2); thus condition (\ref{EQ decr f}) is  rarely satisfied. The selected points will define chains with a fast convergence to $S$. However in order to satisfy (\ref{EQ decr f}), many simulations in the ball $B$ are required, leading to an increased runtime.
\begin{example}
Let $f$ be a bivariate function defined by
\[
(x_1, x_2)\mapsto f(x_1,x_2)=x_1^4+x_2^3-0.5
\] 
The aim is to find $N=1000$ pairs $(x_1, x_2)$ such that $\vert f(x_1, x_2)\vert\leq tol$ where $tol$ is the accuracy. All parameters but $C$ are fixed. The number of initial points is 10; the tolerance is 0.015; the value of $k$ is 1; the number $p$ of supplementary points at each step of the algorithm is 1.\\
In Figure \ref{fonction_fauteuil}(a), the function $f$ is intersected by the horizontal plane $z=0$. The Figure \ref{fonction_fauteuil}(b) represents the intersection in the variables frame.
\begin{figure}[!h]
\centering
\includegraphics[scale=0.4]{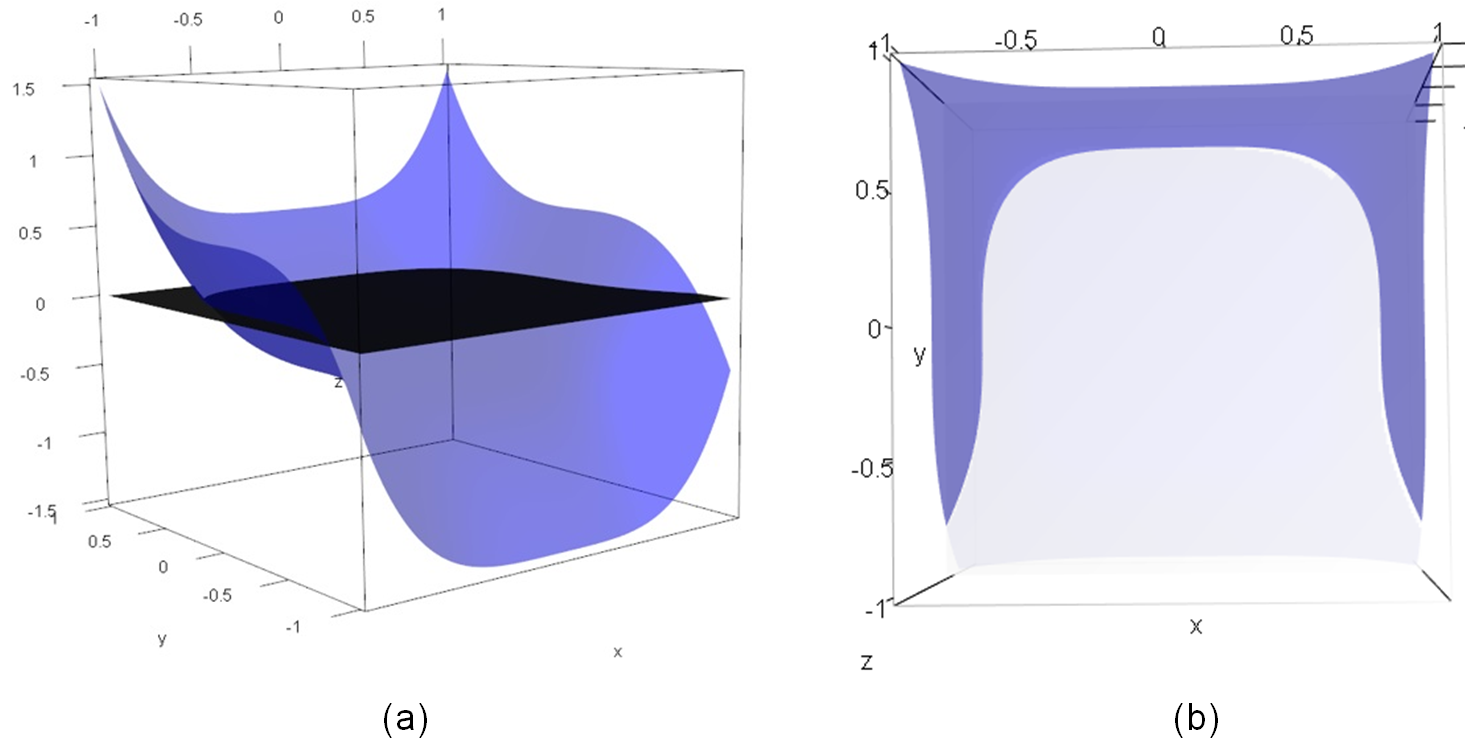}
\caption{Representations of the function with a chair shape}
\label{fonction_fauteuil}
\end{figure}
In Figures \ref{ex_C} (a), (b), (c), we have considered respectively $C=0.55$, $C=0.75$ and $C=0.95$.
\begin{figure}[!h]
\centering
\includegraphics[scale=0.5]{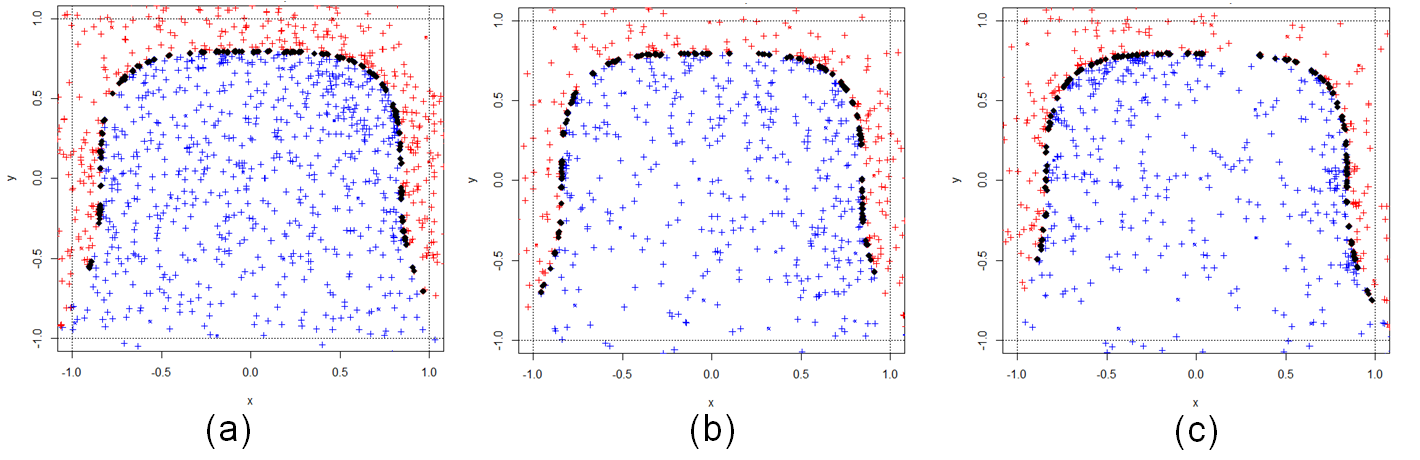}
\caption{Solving equation for the function with a chair shape using SAFIP for three values of $C$}
\label{ex_C}
\end{figure}
The greater $C$, the less the number of evaluations of $f$; furthermore the runtime decreases as $C$ increases. Results are gathered in Table \ref{table_ex2}.
\begin{table}
\begin{center}
\begin{tabular}{|c|c|c|c|c|c|c|c|c|}
\hline
\textbf{n} & \textbf{tol} & \textbf{N} & \textbf{C} & \textbf{k} & \textbf{p} & \textbf{Time} & \textbf{EC} & \textbf{Coverage} \tabularnewline
\hline
10 & 0.015 & 1000 & 0.55 & 1 & 1 & 0.62s & 8.36 & + \tabularnewline
\hline
10 & 0.015 & 1000 & 0.75 & 1 & 1 & 0.44s & 5.33 & + \tabularnewline
\hline
10 & 0.015 & 1000 & 0.95 & 1 & 1 & 0.42s & 5.05 & + \tabularnewline
\hline
\end{tabular}
\end{center}
\caption{Results for Example 2 with different values of $C$}
\label{table_ex2}
\end{table}
\end{example}

\hspace{1pt}
\paragraph{\textit{The role of $k$}}
The parameter $k$ is crucial for the simulation around $z_i$. In order to give some insight on the value of $k$, suppose that $z$ belongs to $\lbrack-1,1\rbrack^2$, and that the mean value of $\vert f(z)\vert$ is $\bar{f}=10$. The current radius of the ball $B$ is $\frac{R}{2}+k\vert f(z)\vert$, with $R$ the distance between two points in the chain. Thus $k$ should be at most of order $\frac{1}{\bar{f}}$; in this way the ball $B$ lays in $\lbrack-1, 1\rbrack^2$, roughly.\\
This appears clearly in Example 3.
\begin{example}
Let $f$ be a bivariate function defined by
\[
(x_1, x_2)\mapsto f(x_1,x_2)=(1-x_1)^2+100(x_2-x_1^2)^2-50
\] 
The aim is to find $N=1000$ pairs $(x_1, x_2)$ such that $\vert f(x_1, x_2)\vert\leq tol$ where $tol$ is the accuracy. All parameters but $k$ are fixed. The number of initial points is 10; the tolerance is 3; the value of $C$ is 0.75; the number $p$ of supplementary points at each step of the algorithm is 1.\\
In Figure \ref{fonction_rosen}(a), the function $f$ is intersected by the horizontal plane $z=0$. Figure \ref{fonction_rosen}(b) represents the intersection in the variables frame.
\begin{figure}[!h]
\centering
\includegraphics[scale=0.4]{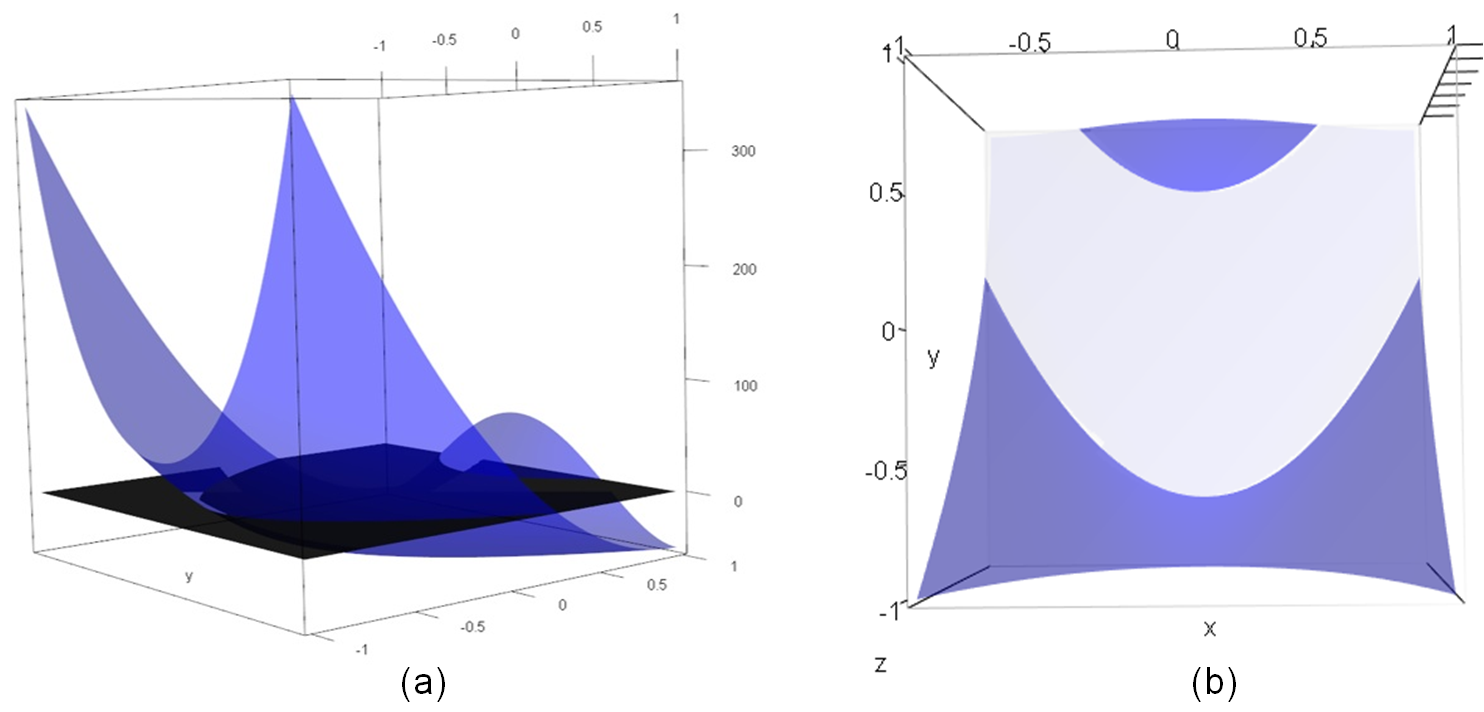}
\caption{Representations of the Rosenbrock function}
\label{fonction_rosen}
\end{figure}
The mean value of $f$ is 200 and its variations belong to $\lbrack-50,350\rbrack$. In Figures \ref{ex_k} (a), (b), (c), we have considered respectively $k=1/200$, $k=10/200$ and $k=50/200$.
\begin{figure}[!h]
\centering
\includegraphics[scale=0.5]{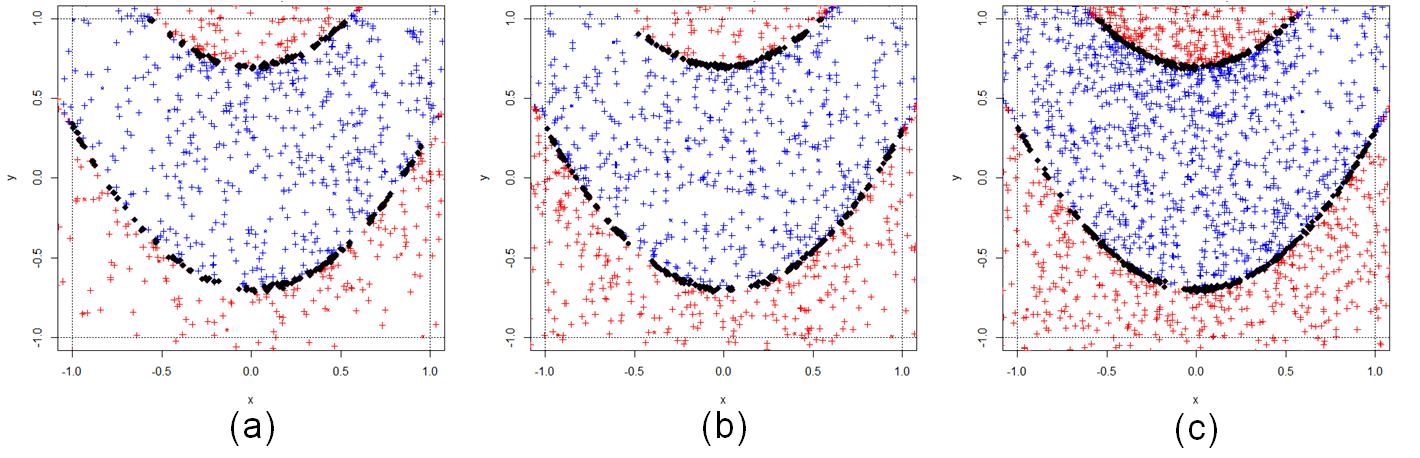}
\caption{Solving equation for the Rosenbrock function using SAFIP for three values of $k$}
\label{ex_k}
\end{figure}
As $k$ increases, the runtime also increases as does the number of evaluations of $f$ in order to obtain one solution, and also the coverage of $S$ improves. When $f$ is costly, $k$ should be chosen small. Results are gathered in Table \ref{table_ex3}.
\begin{table}
\begin{center}
\begin{tabular}{|c|c|c|c|c|c|c|c|c|}
\hline
\textbf{n} & \textbf{tol} & \textbf{N} & \textbf{C} & \textbf{k} & \textbf{p} & \textbf{Time} & \textbf{EC} & \textbf{Coverage} \tabularnewline
\hline
10 & 3 & 1000 & 0.75 & 0.005 & 1 & 0.76s & 10.69 & + \tabularnewline
\hline
10 & 3 & 1000 & 0.75 & 0.05 & 1 & 2.76s & 18.71 & + \tabularnewline
\hline
10 & 3 & 1000 & 0.75 & 0.25 & 1 & 4.16s & 48.49 & ++ \tabularnewline
\hline
\end{tabular}
\end{center}
\caption{Results for Example 3 with different values of $k$}
\label{table_ex3}
\end{table}
\end{example}

\hspace{1pt}
\paragraph{\textit{The role of $p$}}
The number of intermediate points is important since it allows to explore new points of $D$ in quest for $S$. This number should be chosen small with respect to the number $n$ of initializing points. The following example shows that very small values of $p$ may be good choices.
\begin{example}
Let $f$ be a bivariate function defined by
\[
(x_1, x_2)\mapsto f(x_1,x_2)=(x_1-0.5)^2+3x_1x_2-x_2^3-2.25
\] 
The aim is to find $N=1000$ pairs $(x_1, x_2)$ such that $\vert f(x_1, x_2)\vert\leq tol$ where $tol$ is the accuracy. All parameters but $p$ are fixed. The number of initial points is 10; the tolerance is 0.04; the value of $C$ is 0.75; the number $k$ is 0.25.\\
In Figure \ref{fonction_poly}(a), the function $f$ is intersected by the horizontal plane $z=0$. Figure \ref{fonction_poly}(b) represents the intersection in the variables frame.
\begin{figure}[!h]
\centering
\includegraphics[scale=0.4]{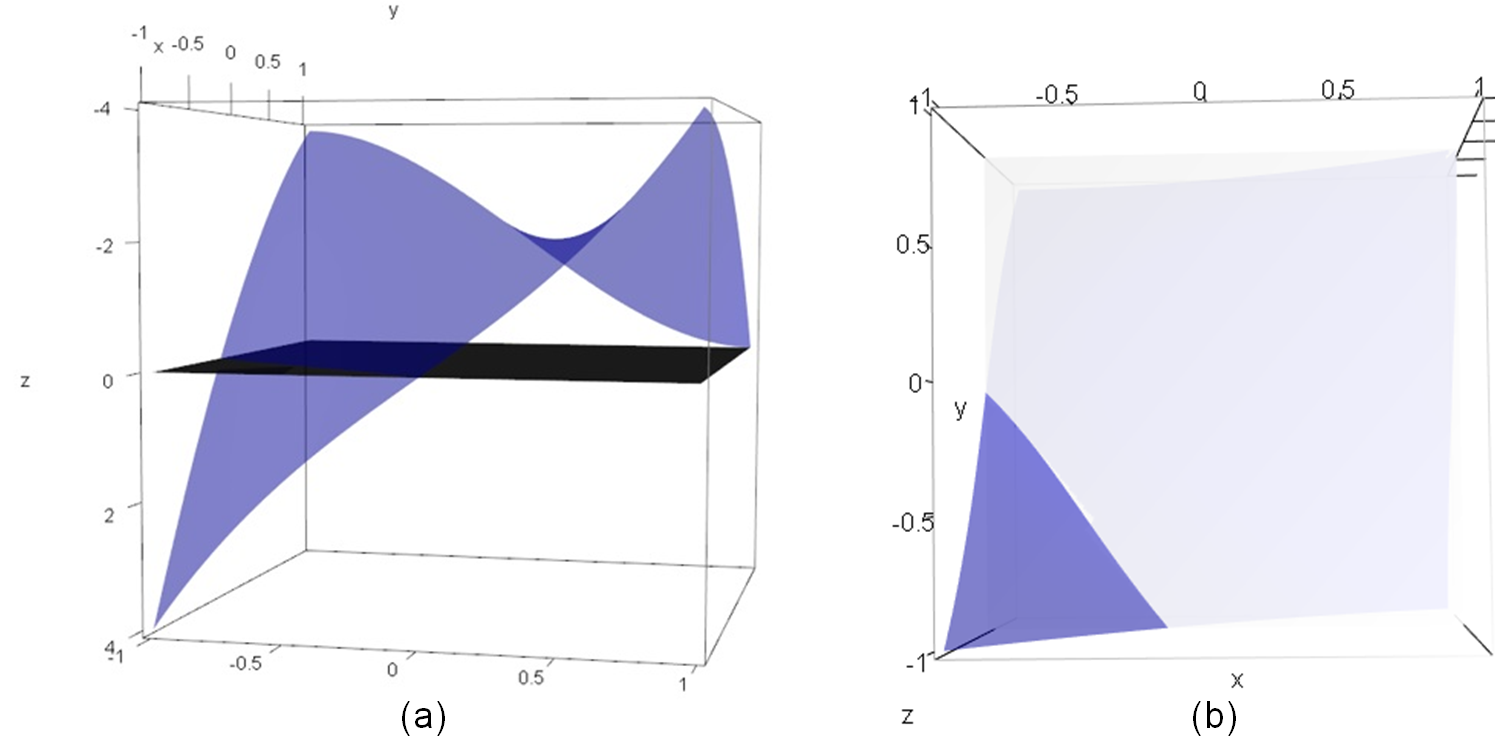}
\caption{Representations of the polynomial function}
\label{fonction_poly}
\end{figure}
$p$ is chosen as 1, 3 and 5. In Figures \ref{ex_p} (a), (b), (c), we see that the algorithm has produced some insight to elements in $S$ at the north-east region; however, the 1000 solutions have been obtained on the south-west component of $S$. Having asked for more solutions, we would have obtained the north-east component. Increasing $p$ to 3 or 5, the coefficient EC increases noticeably and the coverage of $S$ clearly increases.
\begin{figure}[!h]
\centering
\includegraphics[scale=0.5]{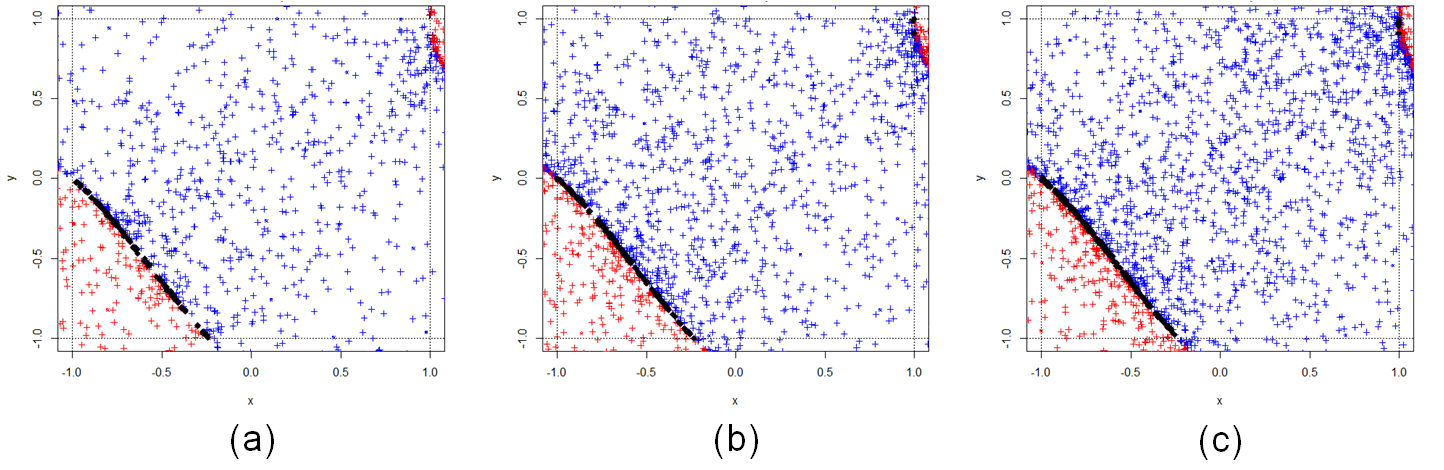}
\caption{Solving equation for the polynomial function using SAFIP for three values of $p$}
\label{ex_p}
\end{figure}
Results are gathered in Table \ref{table_ex4}.
\begin{table}
\begin{center}
\begin{tabular}{|c|c|c|c|c|c|c|c|c|}
\hline
\textbf{n} & \textbf{tol} & \textbf{N} & \textbf{C} & \textbf{k} & \textbf{p} & \textbf{Time} & \textbf{EC} & \textbf{Coverage} \tabularnewline
\hline
10 & 0.04 & 1000 & 0.75 & 0.25 & 1 & 2.12s & 15.94 & + \tabularnewline
\hline
10 & 0.04 & 1000 & 0.75 & 0.25 & 3 & 3.24s & 14.58 & + \tabularnewline
\hline
10 & 0.04 & 1000 & 0.75 & 0.25 & 5 & 4.96s & 17.01 & ++ \tabularnewline
\hline
\end{tabular}
\end{center}
\caption{Results for Example 4 with different values of $p$}
\label{table_ex4}
\end{table}
\end{example}

\hspace{1pt}
\paragraph{\textit{The tolerance factor $tol$}}
The strongest the tolerance (i. e. when $tol$ is small), the highest the number of evaluations of $f$, and the longest the runtime.
\begin{example}
Let $f$ be a bivariate function defined by
\begin{align*}
(x_1, x_2)\mapsto f(x_1,x_2)&=8\sin(7(x_1-0.9)^2)^2)+6\sin((14(x_1-0.9)^2)^2)+(x_1-0.9)^2\nonumber\\
&\hspace{0.1cm}+8\sin((7(x_2-0.9)^2)^2)+6\sin((14(x_2-0.9)^2)^2)\nonumber\\
&\hspace{0.1cm}+(x_2-0.9)^2-15
\end{align*} 
The aim is to find $N=1000$ pairs $(x_1, x_2)$ such that $\vert f(x_1, x_2)\vert\leq tol$ where $tol$ is the accuracy. All parameters but $tol$ are fixed. The number of initial points is 10; the value of $C$ is 0.75; the number $k$ is 0.08; the number $p$ of supplementary points at each step of the algorithm is 1.\\
In Figure \ref{fonction_trigo}(a), the function $f$ is intersected by the horizontal plane $z=0$. Figure \ref{fonction_trigo}(b) represents the intersection in the variables frame.
\begin{figure}[!h]
\centering
\includegraphics[scale=0.4]{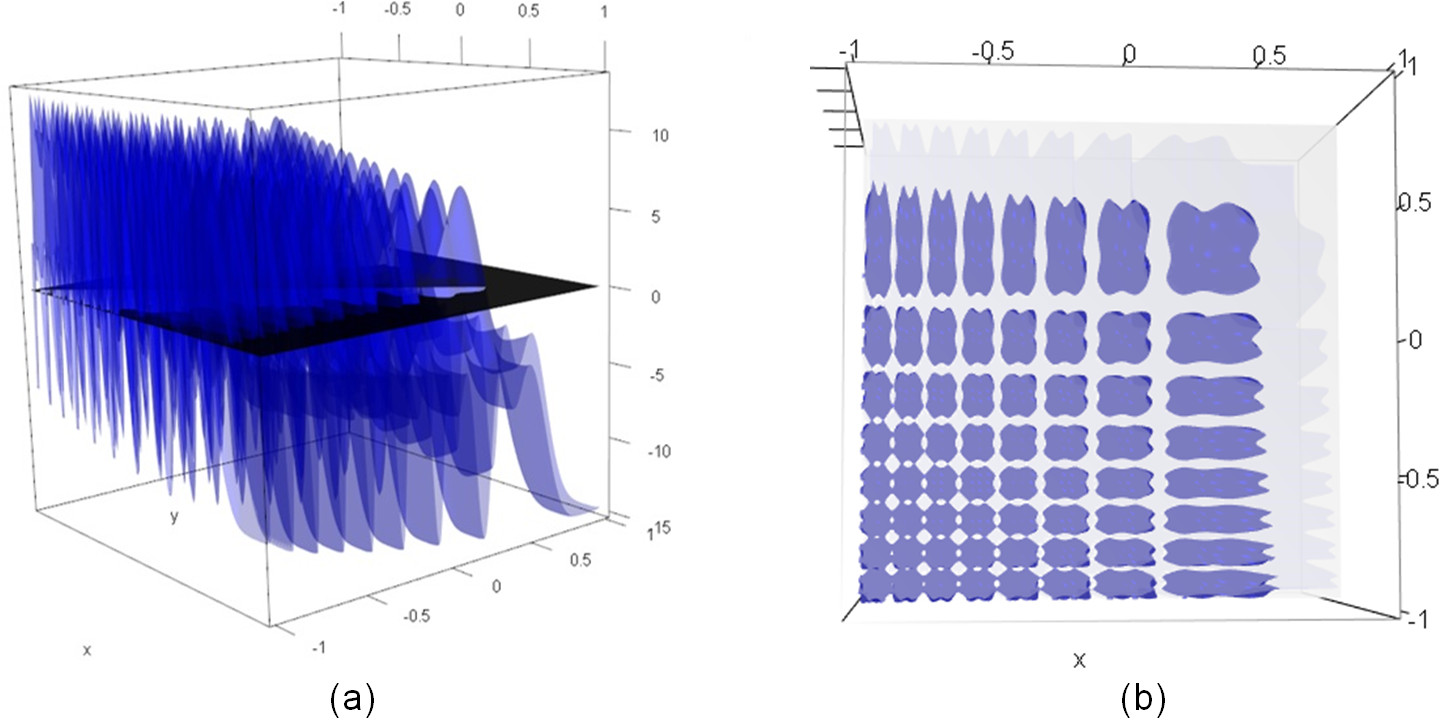}
\caption{Representations of the trigonometric function}
\label{fonction_trigo}
\end{figure}
The function oscillates between -15 and 15. In Figures \ref{ex_tol} (a), (b), (c), algorithm results are illustrated for three values of $tol$ : 0.15, 0.75 and 1.5 .
\begin{figure}[!h]
\centering
\includegraphics[scale=0.5]{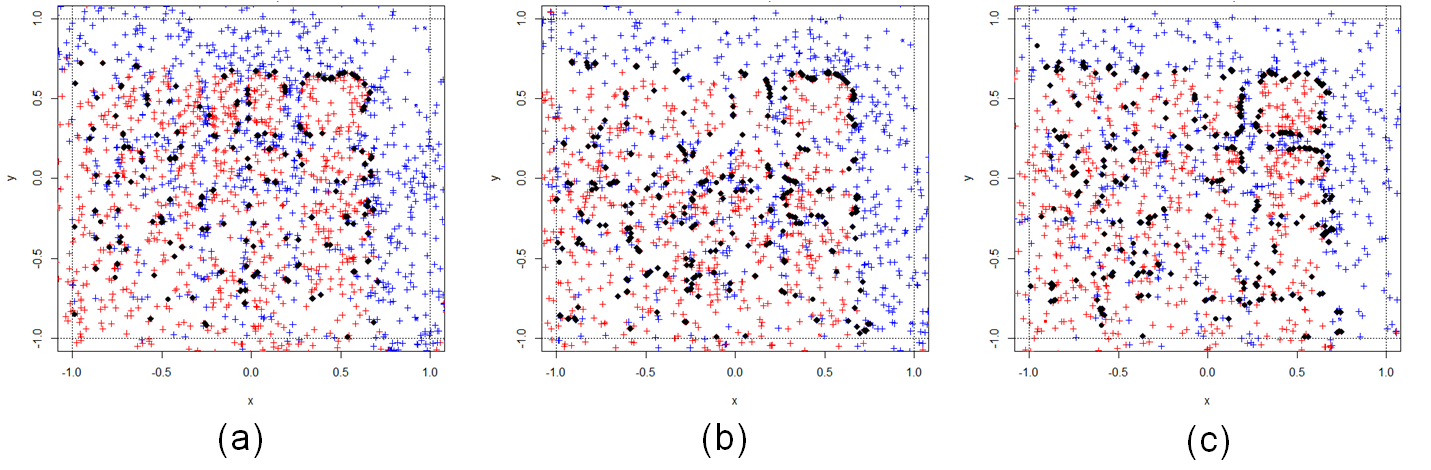}
\caption{Solving equation for the trigonometric function using SAFIP for three values of $tol$}
\label{ex_tol}
\end{figure}
When $tol$ varies from 0.15 to 1.5, the coefficient EC gets divided by 2. Results are gathered in Table \ref{table_ex5}.
\begin{table}
\begin{center}
\begin{tabular}{|c|c|c|c|c|c|c|c|c|}
\hline
\textbf{n} & \textbf{tol} & \textbf{N} & \textbf{C} & \textbf{k} & \textbf{p} & \textbf{Time} & \textbf{EC} & \textbf{Coverage} \tabularnewline
\hline
10 & 0.15 & 1000 & 0.75 & 0.25 & 1 & 2.6s & 43.47 & - \tabularnewline
\hline
10 & 0.75 & 1000 & 0.75 & 0.25 & 1 & 1.68s & 32.2 & - \tabularnewline
\hline
10 & 1.5 & 1000 & 0.75 & 0.25 & 1 & 1.3s & 22.85 & - \tabularnewline
\hline
\end{tabular}
\end{center}
\caption{Results for Example 5 with different values of $p$}
\label{table_ex5}
\end{table}
\end{example}
Due to the complexity of the function and of the set $S$, coverage is mild whatever $tol$; it depends upon the required number of solutions only. 

\hspace{1pt}
\paragraph{\textit{The role of $N$, the required number of solutions }}
The same function as in Example 4 is used in order to focus on the role of the number of solutions. When we ask for 15000 points in $S$, then the runtime remains quite satisfactory; the EC coefficient is 76, due to a choice of $n=1000$. The coverage of $S$ is quite fair.
\begin{figure}[!h]
\centering
\includegraphics[scale=0.5]{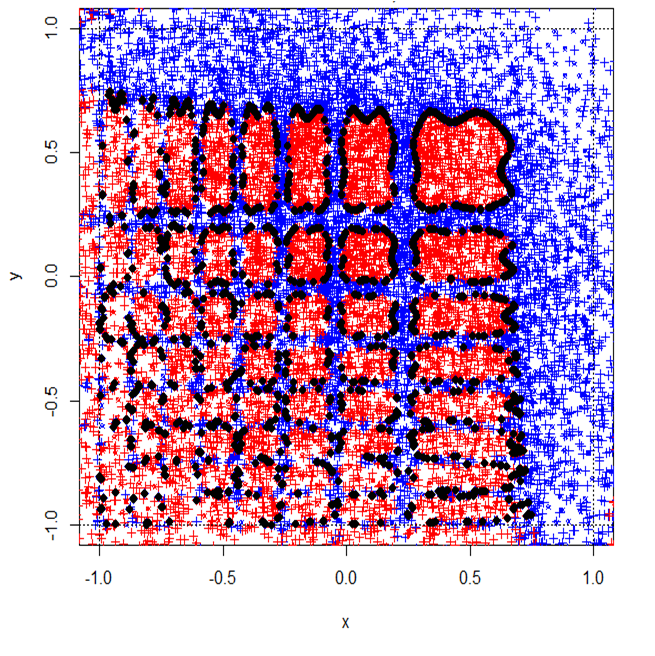}
\caption{Solving equation for the trigonometric function using SAFIP for a bigger number of required final points and a tolerance of 0.15}
\label{ex_tol2}
\end{figure}
Clearly the quality of the solutions improves with the required number of solutions. Not only do we get more solutions, but the coverage of $S$ improves noticeably. 
\begin{example}
Let $f$ be a bivariate function defined by
\[
(x_1, x_2)\mapsto f(x_1,x_2)=20+x_1^2-10\cos(2\pi x_1)+x_2^2-10\cos(2\pi x_2)-60
\]
The aim is to find $N$ pairs $(x_1, x_2)$ such that $\vert f(x_1, x_2)\vert\leq tol$ where $tol$ is the accuracy. All parameters but $N$ are fixed. The number of initial points is 10; $tol$ is fixed to 0.4; the value of $C$ is 0.75; the number $k$ is 0.025; the number $p$ of supplementary points at each step of the algorithm is 1.\\
In Figure \ref{fonction_rast}(a), the function $f$ is intersected by the horizontal plane $z=0$. Figure \ref{fonction_rast}(b) represents the intersection in the variables frame.
\begin{figure}[!h]
\centering
\includegraphics[scale=0.4]{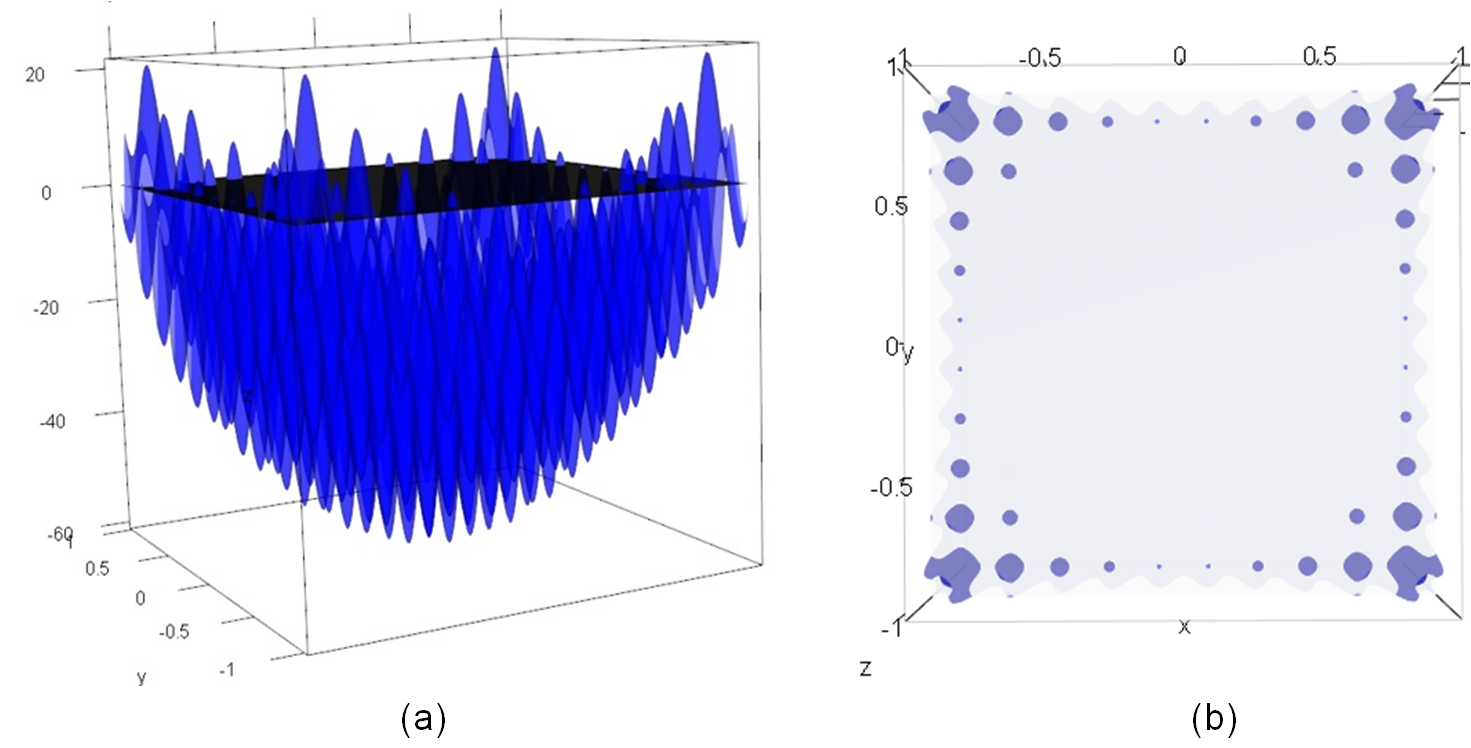}
\caption{Representations of the Rastrigin function}
\label{fonction_rast}
\end{figure}
In Figures \ref{ex_N_final} (a), (b), (c), algorithm results are illustrated for three values of $N$ : 100, 1000 and 2000.
\begin{figure}[!h]
\centering
\includegraphics[scale=0.5]{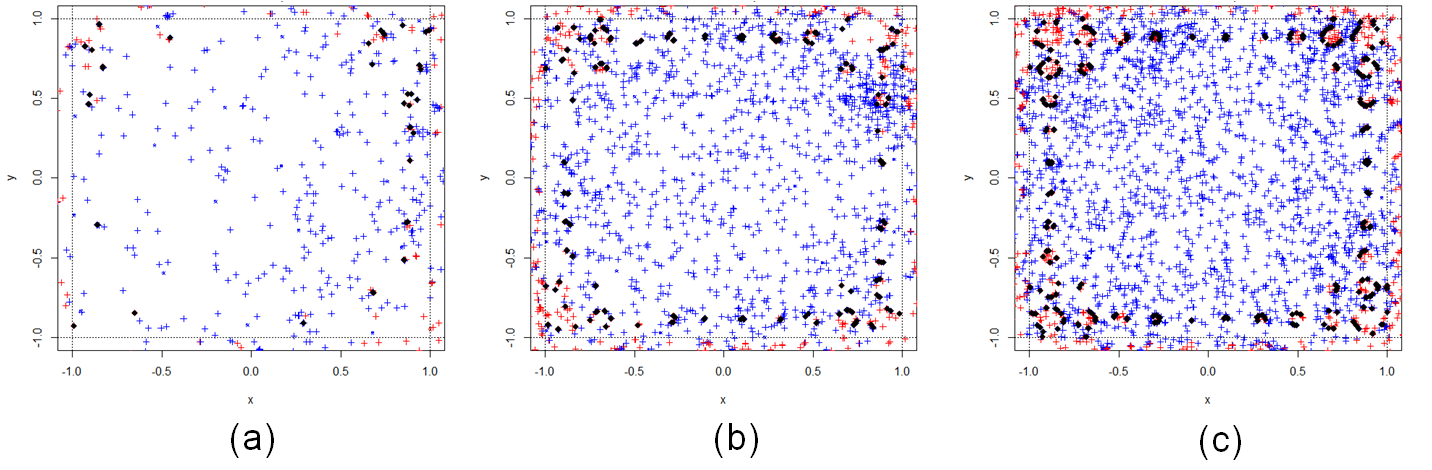}
\caption{Solving equation for the Rastrigin function using SAFIP for three values of $N$}
\label{ex_N_final}
\end{figure}
When $N$ is small, the important feature of the result is that $S$ is covered equally. So no cluster of solutions seems to appear; this is important for exploratory analysis. Results are gathered in Table \ref{table_ex6}.
\begin{table}
\begin{center}
\begin{tabular}{|c|c|c|c|c|c|c|c|c|}
\hline
\textbf{n} & \textbf{tol} & \textbf{N} & \textbf{C} & \textbf{k} & \textbf{p} & \textbf{Time} & \textbf{EC} & \textbf{Coverage} \tabularnewline
\hline
10 & 0.4 & 1000 & 0.75 & 0.025 & 1 & 0.48s & 55.33 & - \tabularnewline
\hline
10 & 0.4 & 1000 & 0.75 & 0.025 & 1 & 3.96s & 60.64 & - \tabularnewline
\hline
10 & 0.4 & 1000 & 0.75 & 0.025 & 1 & 8.6s & 83.82 & - \tabularnewline
\hline
\end{tabular}
\end{center}
\caption{Results for Example 6 with different values of $N$}
\label{table_ex6}
\end{table}
\end{example}

\subsection{Increasing the dimension}\label{sec dim+}
We consider a collection of functions which mimick Example 1, increasing the dimension. The required number of solutions is kept as $N=500$ in all cases.\\
We firstly consider the case in dimension 3, namely we look at points situated in
\begin{equation}\label{EQ S}
S:=\lbrace(x_1,x_2,x_3) : x_1^2+x_2^2+x_3^2-0.5=0\rbrace,
\end{equation}
with $-1\leq x_i\leq 1$ for $i=1, 2, 3$. The result appears in Figure \ref{sphere}.
\begin{figure}[!h]
\centering
\includegraphics[scale=0.4]{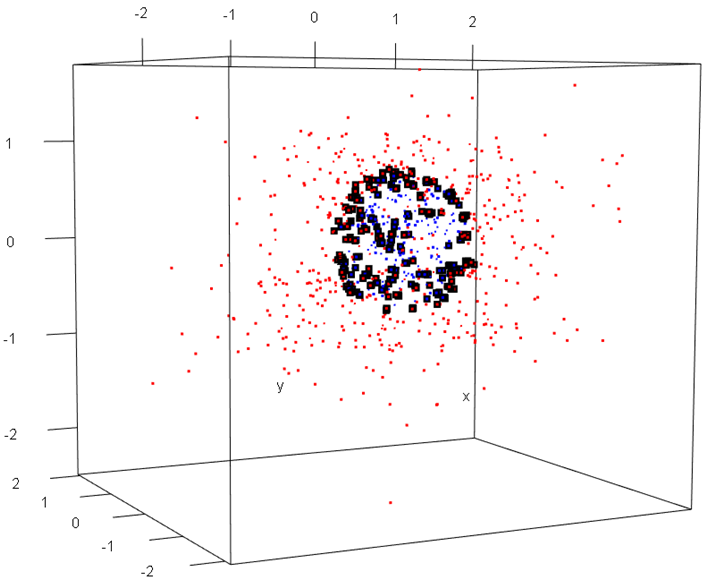}
\caption{Results for spheres in dimension 3}
\label{sphere}
\end{figure}
We also have considered the set 
\begin{equation}\label{EQ S2}
S:=\lbrace(x_1,x_2,x_3) : max(x_1,x_2,x_3)-0.5=0\rbrace;
\end{equation}
See Figure \ref{cube}.\\
\begin{figure}[!h]
\centering
\includegraphics[scale=0.4]{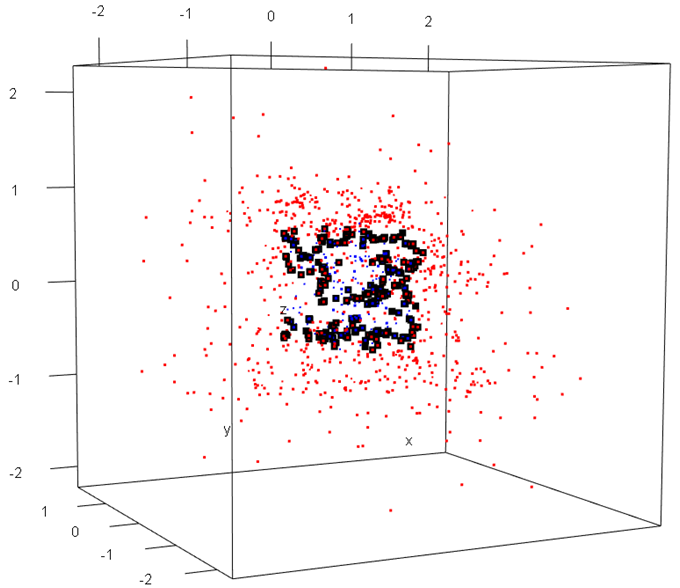}
\caption{Results for cubes in dimension 3}
\label{cube}
\end{figure}
Looking at similar examples as (\ref{EQ S}), we consider $d=4$ and $d=10$; the results comparing three dimensions are in Table \ref{tab_sphere}. 
\begin{table}
\begin{center}
\begin{tabular}{|c|c|c|c|c|c|c|c|c|}
\hline
\textbf{Dim} & \textbf{n} & \textbf{tol} & \textbf{N} & \textbf{C} & \textbf{k} & \textbf{p} & \textbf{Time} & \textbf{EC} \tabularnewline
\hline
2 & 5 & 0.1 & 500 & 0.75 & 1 & 1 & 0.22s & 4.81 \tabularnewline
\hline
3 & 25 & 0.1 & 500 & 0.75 & 1 & 1 & 4.72s & 6.64 \tabularnewline
\hline
4 & 75 & 0.1 & 500 & 0.75 & 1 & 1 & 0.4s & 9.7 \tabularnewline
\hline
10 & 1000 & 0.1 & 500 & 0.75 & 1 & 1 & 53s & 449 \tabularnewline
\hline
\end{tabular}
\end{center}
\caption{Results for spheres in different dimensions}
\label{tab_sphere}
\end{table}
The same is available for (\ref{EQ S2}) in Table \ref{tab_cube}.
\begin{table}
\begin{center}
\begin{tabular}{|c|c|c|c|c|c|c|c|c|}
\hline
\textbf{Dim} & \textbf{n} & \textbf{tol} & \textbf{N} & \textbf{C} & \textbf{k} & \textbf{p} & \textbf{Time} & \textbf{EC} \tabularnewline
\hline
2 & 5 & 0.1 & 500 & 0.75 & 1 & 1 & 0.16s & 4 \tabularnewline
\hline
3 & 25 & 0.1 & 500 & 0.75 & 1 & 1 & 4.2s & 5.04 \tabularnewline
\hline
4 & 75 & 0.1 & 500 & 0.75 & 1 & 1 & 0.72s & 8 \tabularnewline
\hline
10 & 1000 & 0.1 & 500 & 0.75 & 1 & 1 & 51s & 614 \tabularnewline
\hline
\end{tabular}
\end{center}
\caption{Results for cubes in different dimensions}
\label{tab_cube}
\end{table}
The number of initializing points has been chosen accordingly: $n=75$ for $d=4$, and $n=1000$ for $d=10$; a coherent choice for $n$ would have been $n=5^9$ for $d=10$, an impracticable choice.\\
Obviously the indicator EC increases with $n$. However, choosing $n=5^9$ and $N=500$, the value of EC exceeds 2000, which proves that $n$ should be kept low, growing slowly with respect to $d$. 

\section{Simultaneous inverse problems}
\label{sec_system}

\subsection{Algorithm}
Let $f$ and $g$ denote two functions defined on $D$; each of these functions $f$ and $g$ is assumes to satisfy hypothesis (\ref{EQ decr f}) together with conditions (\ref{cond1}) and (\ref{cond3}). We will make use of constants $C$, $k$, $n$ and $p$ defined in Section \ref{sec algo}; these constants will play a similar role in the present on $f$ and $g$. The number of common solutions to the system 
\begin{equation}
\left\{
  \begin{array}{rcr}
    f(x) & = & 0 \\
    g(x) & = & 0 \\
  \end{array}
\right.
\end{equation}
is denoted $N$.\\
Also the present section considers simultaneous inverse problems pertaining to two functions; quantization to a given number of functions is straightforward.\\
The algorithm is as follows with similar notation as in Section \ref{sec algo}, it holds
\begin{equation}
\left\{
  \begin{array}{rcr}
    f(x) & = & 0 \\
    g(x) & = & 0 
  \end{array}
\right.
\Leftrightarrow
\left\{
  \begin{array}{rcr}
    f(x)+\frac{x}{2k}+\frac{x}{2k} & = & \frac{x}{2} \\
    g(x)+\frac{x}{2k}+\frac{x}{2k} & = & \frac{x}{2} 
  \end{array}
\right.
\end{equation}
which yields to define
\begin{equation}
z_{i+1}=z_i+\frac{z_{i-1}-z_i}{2}+k\max(\vert f(z_i)\vert, \vert g(z_i)\vert).
\end{equation}
Inequality (\ref{EQ ineg dist}) is substituted by
\begin{equation}
R_{i+1}\leq\frac{R_i}{2}+kmax(\vert f(z_i)\vert, \vert g(z_i)\vert).
\end{equation}
Similarly as in (\ref{EQ rec}), the choice of $z_{i+1}$ follows the rule
\begin{equation}\label{EQ_rec_multi}
z_{i+1}=z_i+u_i
\end{equation}
where $u_i$ is drawn randomly on $\mathcal{B}(\underline{0}, \frac{R_i}{2}+kmax(\vert f(z_i)\vert,\vert g(x_i)\vert)$.\\
With those changes, denoting $S=\lbrace x:f(x)=0, g(x)=0\rbrace$, it holds
\begin{theorem}
Any sequence $(z_i)$ defined as above converges a. s. with limit in $S$.
\end{theorem}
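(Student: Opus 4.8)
The plan is to reduce the multi-function statement to Theorem \ref{thm1} by observing that the whole construction of Section \ref{sec_system} is \emph{formally identical} to the one of Section \ref{sec algo}, with the single scalar quantity $\vert f(z_i)\vert$ systematically replaced by $F(z_i):=\max(\vert f(z_i)\vert, \vert g(z_i)\vert)$. Since $f$ and $g$ are continuous on the bounded closed domain $D$, so is $F$; moreover $F$ is non-negative and $F(x)=0$ if and only if $f(x)=0$ and $g(x)=0$, so the zero set of $F$ is exactly $S=\lbrace x: f(x)=0, g(x)=0\rbrace$. Thus the system $\{f=0, g=0\}$ is equivalent to the single inverse problem $F(x)=0$, and the recurrence (\ref{EQ_rec_multi}) together with its governing inequality $R_{i+1}\leq \frac{R_i}{2}+kF(z_i)$ is precisely the recurrence (\ref{EQ rec})--(\ref{EQ ineg dist}) written for $F$ in place of $\vert f\vert$.

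First I would make this substitution explicit: restate the stopping/selection rule (\ref{EQ decr f}) as $F(z_i)\leq C F(z_{i-1})$, note that the draw of $u_i$ in (\ref{EQ_rec_multi}) is uniform on $\mathcal{B}(\underline 0, \frac{R_i}{2}+kF(z_i))$, and check that these are literally the hypotheses under which Theorem \ref{thm1} was proved — nothing in that proof uses any property of $\vert f\vert$ beyond continuity, non-negativity, and the contraction-type inequality (\ref{EQ ineg dist}). Then I would invoke Theorem \ref{thm1} applied to $F$: any infinite sequence $(z_i)$ so generated converges almost surely, and its limit $z_\infty$ satisfies $F(z_\infty)=0$. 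Finally, unwinding the definition of $F$, $F(z_\infty)=0$ forces $\vert f(z_\infty)\vert = \vert g(z_\infty)\vert = 0$, i.e. $z_\infty\in S$, which is the claim.

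The one point that genuinely needs care — and which I would expect to be the main obstacle — is verifying that the selection step behaves the same way. In Theorem \ref{thm1} the chain stops whenever (\ref{EQ decr f}) fails, and the statement concerns only \emph{infinite} sequences; here the corresponding event is the failure of $F(z_i)\leq C F(z_{i-1})$. One must check that the argument in the proof of Theorem \ref{thm1} showing that the surviving (infinite) chains converge with limit in the zero set does not secretly exploit the specific analytic form of $\vert f\vert$, but only the abstract features listed above; since $F$ is a $\max$ of two continuous functions it is continuous but generically non-smooth along the locus $\{\vert f\vert = \vert g\vert\}$, so if the proof of Theorem \ref{thm1} used any differentiability of $\vert f\vert$ this reduction would fail. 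A reading of that proof confirms it does not. I would close by remarking that the same reduction, with $F:=\max_{1\le j\le m}\vert f_j\vert$, handles an arbitrary finite system $\{f_1=0,\dots,f_m=0\}$, which is the quantization alluded to above. $\blacksquare$
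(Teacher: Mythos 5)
Your reduction is correct and is exactly the intended route: the paper gives no separate proof of this theorem in the Appendix, implicitly relying on the observation that the construction of Section \ref{sec_system} is the basic algorithm run on $F:=\max(\vert f\vert,\vert g\vert)$, whose continuity, non-negativity and zero set $S$ are all that the proof of Theorem \ref{thm1} uses. One small remark: the paper's hypothesis is that $f$ and $g$ \emph{each} satisfy (\ref{EQ decr f}) along the chain, which implies your condition $F(z_i)\leq C\,F(z_{i-1})$, so your formulation is, if anything, slightly weaker and still suffices.
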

and
\begin{theorem}
For any $x\in S$ and $\epsilon_0>0$ such that (\ref{cond1}) and (\ref{cond3}) hold simultaneously for $f$ and $g$, and when $z_0\in\mathcal{B}(x,\epsilon_0)$, thus the sequence $(z_n)$ is infinite and converges to $x$.
\end{theorem}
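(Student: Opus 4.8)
The plan is to reduce the statement to Theorem~\ref{thm2} by observing that the system version of the algorithm is exactly the scalar algorithm of Section~\ref{SECTION OUTLOOK} run on the single non-negative function
\[
h(z):=\max\bigl(\vert f(z)\vert,\vert g(z)\vert\bigr),\qquad z\in D.
\]
Indeed, replacing $\vert f\vert$ by $h$ in the contraction bound (\ref{EQ ineg dist}), the acceptance test (\ref{EQ decr f}) and the sampling rule (\ref{EQ rec}) reproduces precisely the recursion and sampling rule of Section~\ref{sec_system} (equation (\ref{EQ_rec_multi}) together with the two displays preceding it). Since $h$ is continuous and $\{h=0\}=\{f=0\}\cap\{g=0\}=S$, Theorem~\ref{thm1} applied to $h$ already gives the companion statement of this section: every infinite sequence $(z_i)$ converges a.s.\ to a point of $S$. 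It then remains to show that, when $z_0\in\mathcal{B}(x,\epsilon_0)$, the sequence is infinite and its a.s.\ limit is $x$; these are exactly the conclusions of Theorem~\ref{thm2}, so it suffices to check that $h$ inherits the regularity conditions (\ref{cond1}) and (\ref{cond3}) at $x$ for the given $\epsilon_0$.

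Condition (\ref{cond1}) for $h$. Let $z_0,z_1\in\mathcal{B}(x,\epsilon_0)$ with $\Vert x-z_1\Vert\le\Vert x-z_0\Vert$. Condition (\ref{cond1}) for $f$ gives $\{\vert f\vert\le\vert f(z_1)\vert\}\varsubsetneq\{\vert f\vert\le\vert f(z_0)\vert\}$, which forces $\vert f(z_1)\vert<\vert f(z_0)\vert$ (otherwise the opposite inclusion would hold, and no set is a proper subset of itself), and likewise $\vert g(z_1)\vert<\vert g(z_0)\vert$. Hence $h(z_1)<h(z_0)$, so $\{h\le h(z_1)\}\subseteq\{h\le h(z_0)\}$, and this inclusion is strict because $z_0$ lies in the right-hand set but not in the left-hand one. (Applying the same with $z_0=z_1$ equal to any putative second point of $S$ in $\mathcal{B}(x,\epsilon_0)$ would yield $S\varsubsetneq S$; thus $x$ is the unique point of $S$ in that ball, which is what pins the limit of $(z_i)$ to $x$.)

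Condition (\ref{cond3}) for $h$. Fix $z_0\in\mathcal{B}(x,\epsilon_0)$ and let $z$ lie in the annulus $E_0\cap A_1$ built from $h$. If the maximum defining $h(z)$ is attained by $\vert f\vert$, then, using $h(z_0)\ge\vert f(z_0)\vert$ and condition (\ref{cond3}) for $f$,
\[
h(z_0)-h(z)\;\ge\;\vert f(z_0)\vert-\vert f(z)\vert\;\ge\;m_f\,\Vert z-z_0\Vert,
\]
and symmetrically, with $g$ and $m_g$, when the maximum is attained by $\vert g\vert$. Hence $h(z_0)-h(z)\ge m\,\Vert z-z_0\Vert$ with $m:=\min(m_f,m_g)$, and $m<\tfrac{1}{4\epsilon_0}$ because both $m_f$ and $m_g$ are. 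With (\ref{cond1}) and (\ref{cond3}) verified for $h$, Theorem~\ref{thm2} applied to $h$ yields that $(z_i)$ is infinite and $\lim z_i=x$ a.s., which is the assertion.

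The step I expect to be the main obstacle is the geometric bookkeeping in condition (\ref{cond3}): the ball $B$, and hence the annulus $E_0\cap A_1$, associated with $h$ has radius $\tfrac{R_0}{2}+k\,h(z_0)$, larger than the radii attached to $f$ and to $g$ individually, so one must make sure that the lower bounds supplied by (\ref{cond3}) for $f$ and for $g$ remain valid on this enlarged region. This is legitimate because those are pointwise estimates valid on a full neighbourhood of $x$ (and the set $A_1$ already confines $z$ close to $x$); should one wish to avoid relying on that, an equivalent route is to re-run the proof of Theorem~\ref{thm2} verbatim with $h$ in place of $\vert f\vert$, invoking (\ref{cond1}) and (\ref{cond3}) for $f$ and for $g$ in tandem at each step; the Borel--Cantelli / supermartingale argument that bounds $R_i$ and drives $z_i\to x$ carries over unchanged.
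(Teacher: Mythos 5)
The paper in fact offers no proof of this theorem: the appendix only proves Theorems \ref{thm1} and \ref{thm2}, and the statements of Section \ref{sec_system} are left as consequences of the substitution of $\vert f\vert$ by $\max(\vert f\vert,\vert g\vert)$ in the recursion. Your reduction to the scalar case via $h=\max(\vert f\vert,\vert g\vert)$ is therefore exactly the route the paper intends, and you supply more detail than the paper does. Your verification of condition (\ref{cond1}) for $h$ is correct as far as the paper's formulation allows (note in passing that your parenthetical remark, applied with $z_1=z_0$, shows that (\ref{cond1}) as literally written is self-contradictory; the condition must be read with a strict inequality $\Vert x-z_1\Vert<\Vert x-z_0\Vert$, and your argument survives that reading).

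The one genuine gap is the point you flag yourself, and it should not be waved away by saying the estimates are ``valid on a full neighbourhood of $x$'': condition (\ref{cond3}) is, by its statement, only guaranteed on the annulus $E_0\cap A_1$ built from $f$ (respectively from $g$), whose outer radius is $\frac{R_0}{2}+k\vert f(z_0)\vert$ (respectively $\frac{R_0}{2}+k\vert g(z_0)\vert$); the sampling ball of the system algorithm has the larger radius $\frac{R_0}{2}+k\,h(z_0)$, so points of $E_0^h\cap A_1^h$ lying in the shell between the smaller and the larger radius are covered by neither hypothesis, and no monotonicity of $h$ lets you transfer the bound there (when $h(z)=\vert g(z)\vert$ you need the $g$-estimate precisely at a $z$ that may lie outside the $g$-ball). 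The correct repair is to note that the proof of Theorem \ref{thm2} does not require the decrease estimate on \emph{all} of the sampling ball, only on some nonvoid open subset of it from which $z_1$ can be drawn with positive probability. Take for this subset the intersection of the $f$-annulus, the $g$-annulus and $A_1$: it is contained in the $h$-ball, it is nonempty for $k_1$ small (both inner radii are at most $k_1 h(z_0)<2\epsilon_0$ and both outer radii exceed $k_1h(z_0)$), and on it (\ref{cond3}) for $f$ and for $g$ apply simultaneously, giving $\vert f(z_1)\vert\leq(1-m_fk_1)\vert f(z_0)\vert$ and $\vert g(z_1)\vert\leq(1-m_gk_1)\vert g(z_0)\vert$, hence $h(z_1)\leq\bigl(1-\min(m_f,m_g)k_1\bigr)h(z_0)$, which is the acceptance test (\ref{EQ decr f}) for $h$ with $C=1-\min(m_f,m_g)k_1$. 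With that substitution your argument goes through; the convergence to $x$ (rather than to another point of $S$) is then inherited from whatever force the contradiction argument at the end of the proof of Theorem \ref{thm2} carries, which is a weakness of the paper rather than of your reduction.
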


\subsection{Examples}
Due to  (\ref{EQ_rec_multi}), the point $z_{i+1}$ is randomly chosen in a ball $B$ centerd at $z_i$ when both $\vert f(z_i)\vert$ and $\vert g(z_i)\vert$ share a common measural order of magnitude. The best case is when $B$ has a moderate radius; it is therefore useful to normalize $f$ and $g$ on $D$; this preliminary procedure obviously does not modify the set $S$.\\
We present three examples of simultaneous inversion, based on the functions presented on Section \ref{sec algo}. In all examples the parameters are $n=20$, $p=1$, $tol=0.01$, $C=0.75$, $k=1$. $N$ equals 10 in Example \ref{ex_multi1}, it equals 100 in Example \ref{ex_multi2} and Example \ref{ex_multi3}.

\begin{example}[A regular case]\label{ex_multi1}
We choose $f$ as in Example \ref{ex_n_init} and $g(x)=f(x-a), a=(0.2, -0.2)$. Therefore $f(x)=0$ is as in Example \ref{ex_n_init} and $g(x)=0$ is a circle with same radius and center $a$.\\
Figures \ref{ex_multi_1_1}(a) and (b) show the graphs of $f$ and $g$ together with the intersection of the plane $z=0$. 
\begin{figure}[!h]
\centering
\includegraphics[scale=0.5]{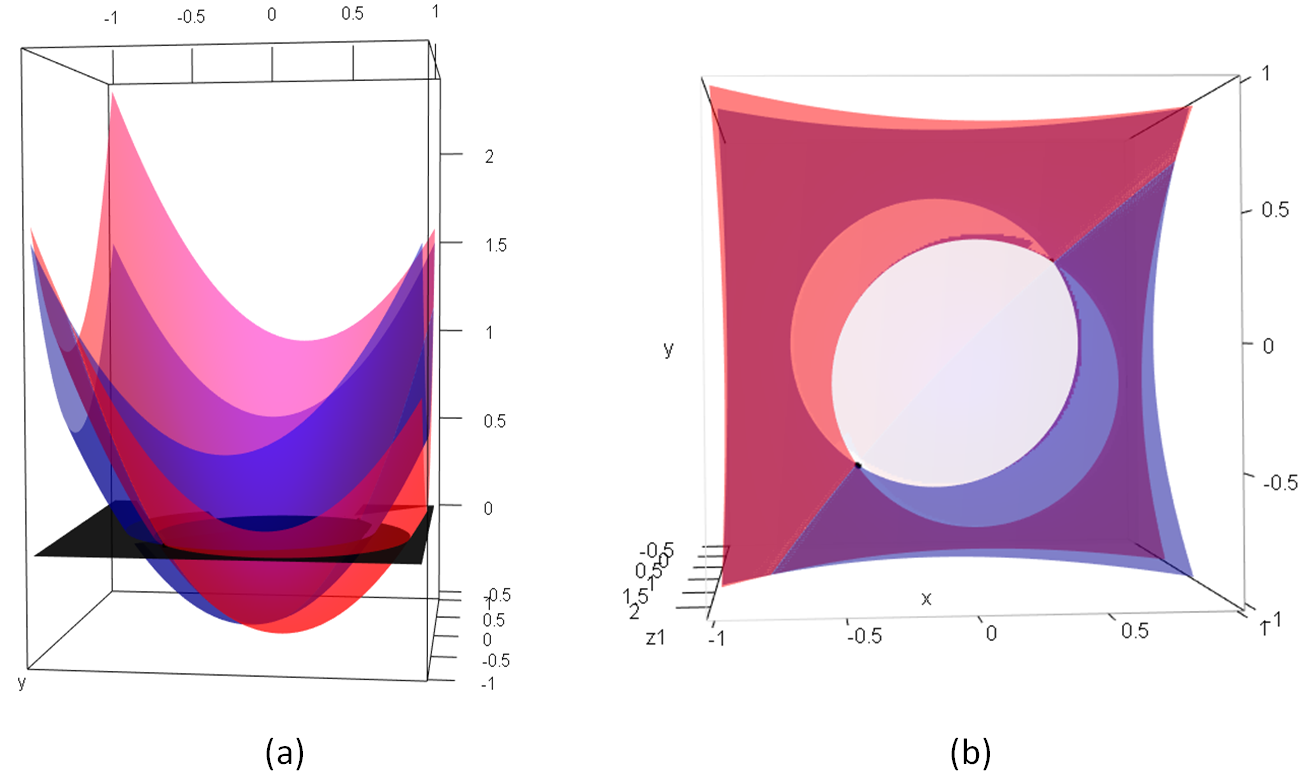}
\caption{Representations of $f$, $g$ and $S$}
\label{ex_multi_1_1}
\end{figure}
The set $S$ consists in the two points shown in Figure \ref{ex_multi_1_1}(b). Those points are indeed well estimated by the present algorithm, as seen in Figure \ref{ex_multi_1_2}.
\begin{figure}[!h]
\centering
\includegraphics[scale=0.5]{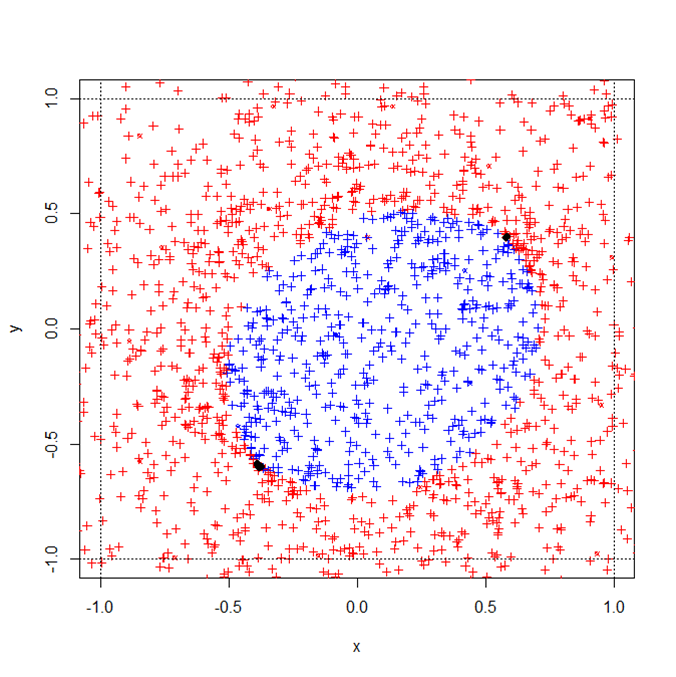}
\caption{Solutions obtained with SAFIP algorithm}
\label{ex_multi_1_2}
\end{figure}
The runtime is 0.62s and the efficiency coefficient is 516.
\end{example}

\begin{example}[Mixing a regular function and an irregular one]\label{ex_multi2}
We choose $f(x)$ as defined in Example \ref{ex_k}, a regular function, and $g(x)$ the Rastrigin function of Example \ref{ex_N_final}. The Figure \ref{ex_multi_2_1}(a) shows the two function, and Figure \ref{ex_multi_2_1}(b) provides the set $S$, which is defined as the intersection of the frontier points of the red domains (the solutions to $g(x)=0$) wt=ith the frontier points of the blue domains (the solutions to $g(x)=0$). There are 29 points in $S$.
\begin{figure}[!h]
\centering
\includegraphics[scale=0.5]{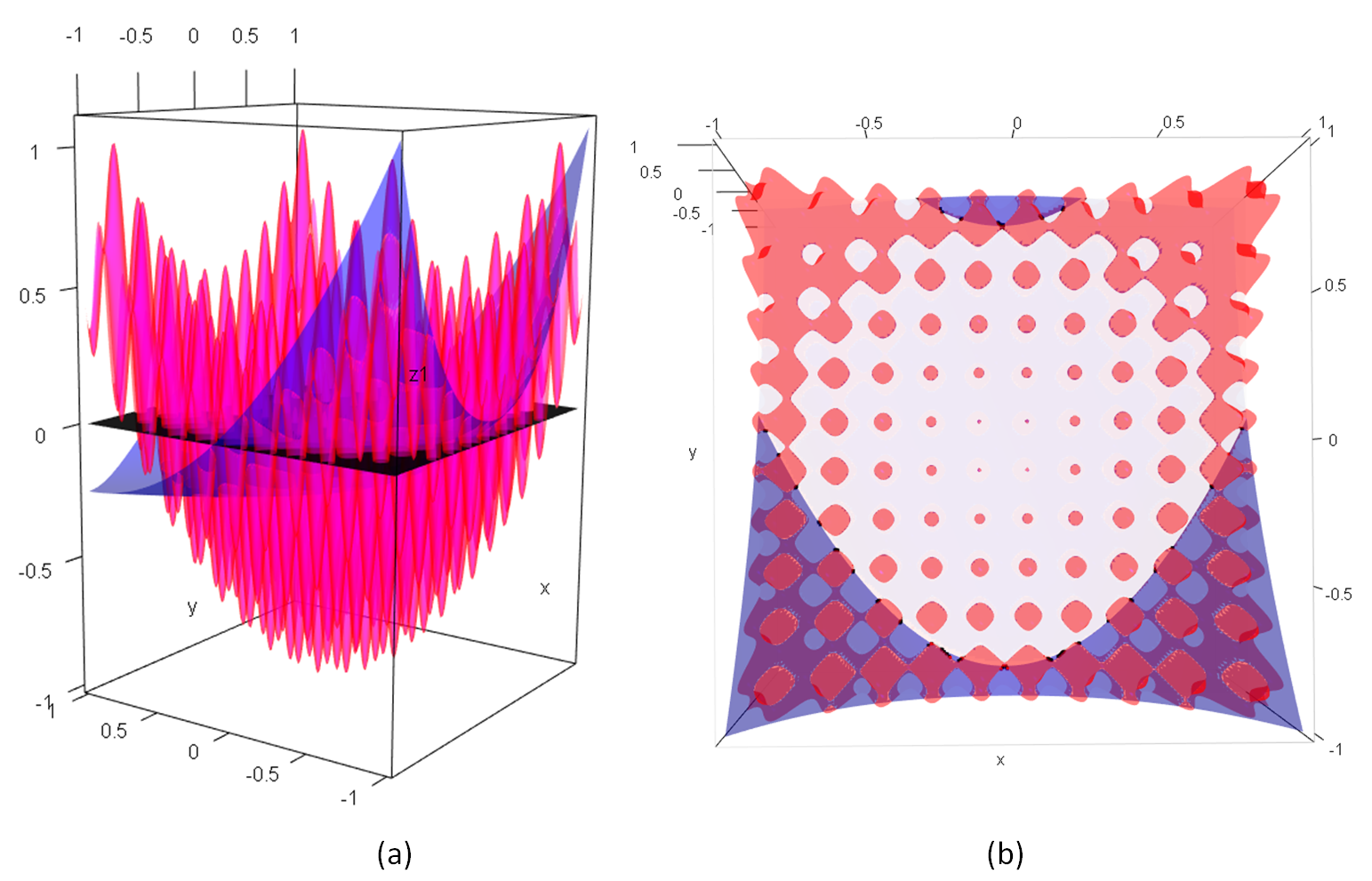}
\caption{Representations of $f$, $g$ and $S$}
\label{ex_multi_2_1}
\end{figure}
The algorithm provides solutions as shown in Figure \ref{ex_multi_2_2}, with runtime 14s and efficiency coefficient 375.
\begin{figure}[!h]
\centering
\includegraphics[scale=0.5]{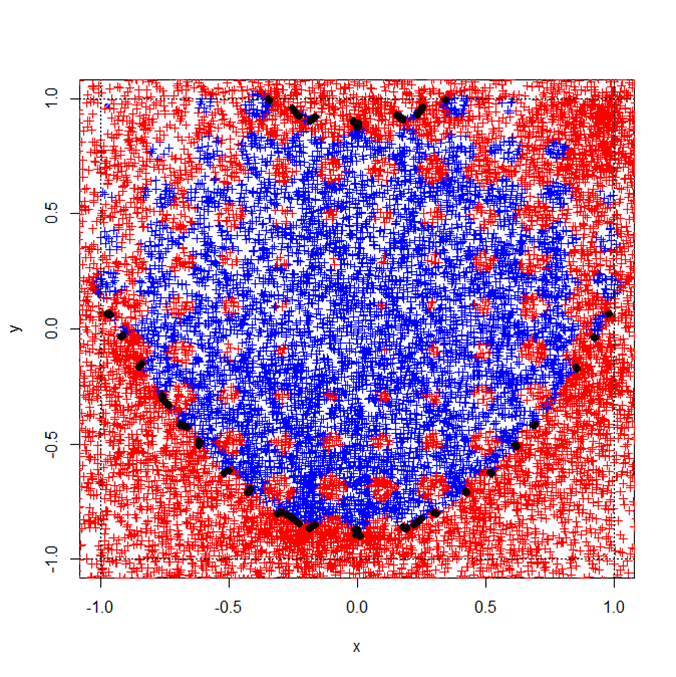}
\caption{Solutions obtained with SAFIP algorithm}
\label{ex_multi_2_2}
\end{figure}
Table \ref{table_ex_multi} provides results for different values of $C$, $k$ and $n$.\\
\begin{table}
\begin{center}
\begin{tabular}{|c|c|c|}
\hline
\textbf{C} & \textbf{EC} & \textbf{Temps} \tabularnewline
\hline
0.55 & 905 & 4.72s \tabularnewline
\hline
0.75 & 469 & 1.66s \tabularnewline
\hline
0.95 & 311 & 1.24s \tabularnewline
\hline
\textbf{k} & \textbf{EC} & \textbf{Temps} \tabularnewline
\hline
1 & 546 & 5.02s \tabularnewline
\hline
10 & 1963 & 8.8s \tabularnewline
\hline
50 & 6372 & 32.04s \tabularnewline
\hline
\textbf{n} & \textbf{EC} & \textbf{Temps} \tabularnewline
\hline
10 & 577 & 2.54s \tabularnewline
\hline
100 & 622 & 3.36s \tabularnewline
\hline
300 & 708 & 3.36s  \tabularnewline
\hline
\end{tabular}
\end{center}
\caption{Results for cubes in different dimensions}
\label{tab_cube}
\end{table}
As $C$ increases, EC decreases; as $k$ or $n$ increases, EC increases too.\\
A clear feature in Figure \ref{ex_multi_2_2} is that all the 29 points in $S$ are obtained a limiting points of SAFIP.
\end{example}

\begin{example}[A last example]\label{ex_multi3}
We choose $f(x)$ as in Example \ref{ex_n_init} and $g(x)$ the trigonometric function of Example \ref{ex_tol}. Figure \ref{ex_multi_3_1}(a) shows the functions $f$ and $g$; Figure \ref{ex_multi_3_1} (b) shows the intersection set $S$ which contains 33 points.
\begin{figure}[!h]
\centering
\includegraphics[scale=0.5]{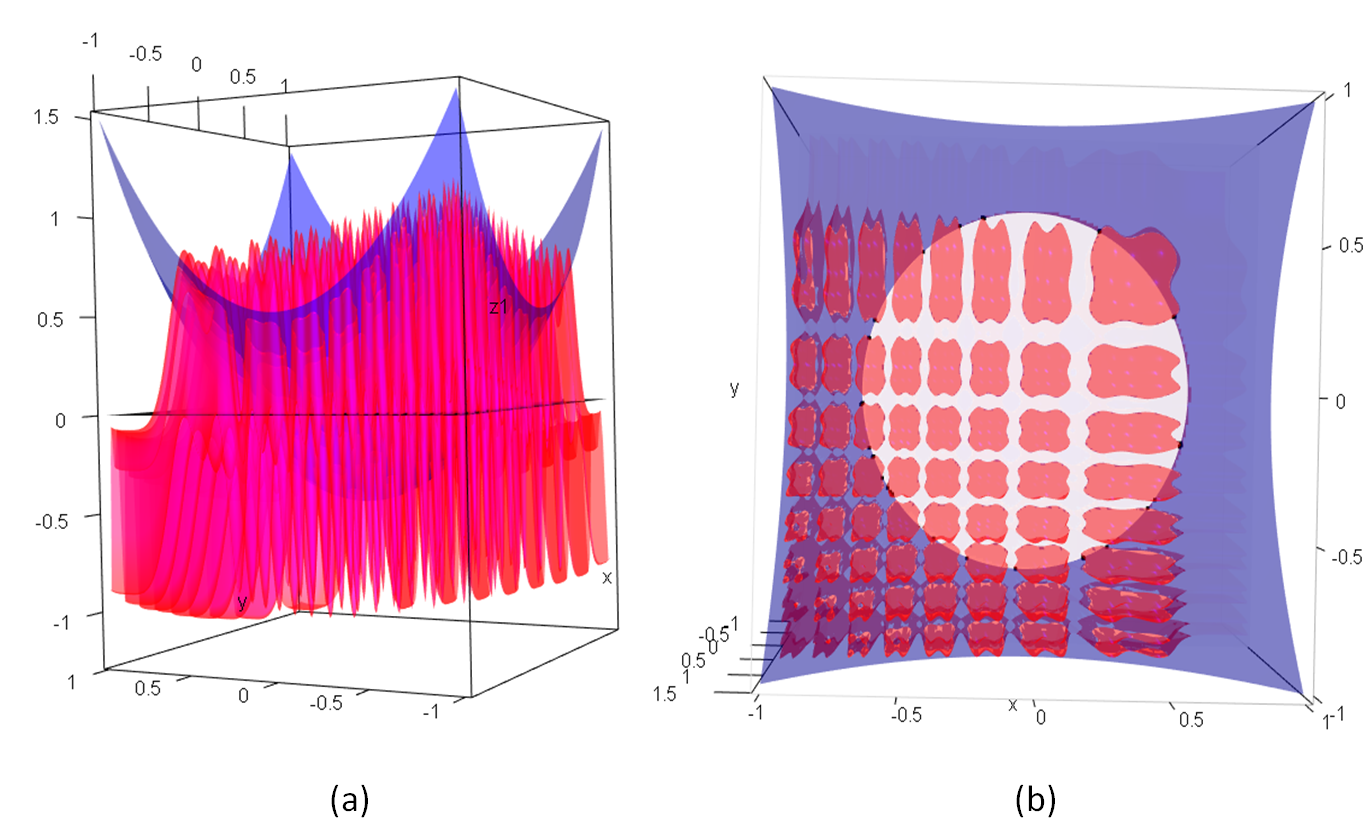}
\caption{Representations of $f$, $g$ and $S$}
\label{ex_multi_3_1}
\end{figure}
We asked for $N=100$ solutions; the set $S$ is not totally covered (we obtain 26 points in $S$ as it can be seen on Figure \ref{ex_multi_3_2}); a larger value of $N$ would provide all solutions
\begin{figure}[!h]
\centering
\includegraphics[scale=0.5]{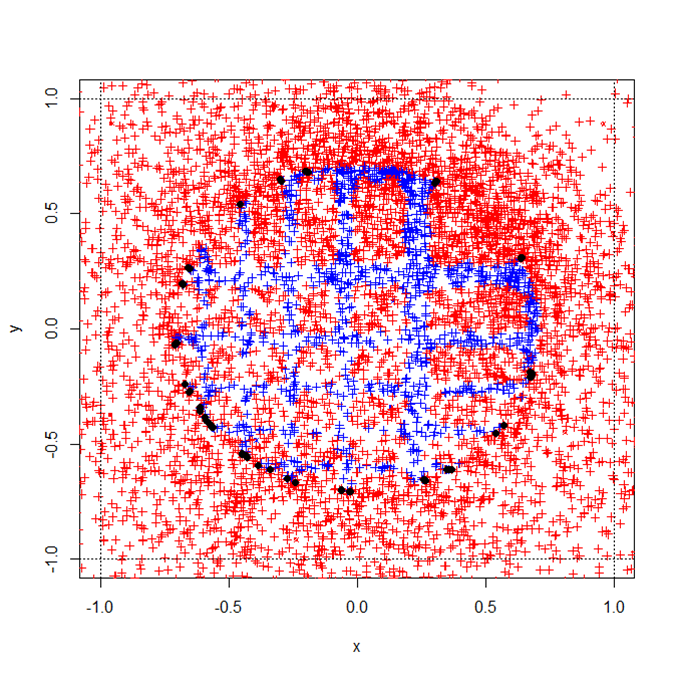}
\caption{Solutions obtained with SAFIP algorithm}
\label{ex_multi_3_2}
\end{figure}
The runtime is 4.1s and EC is 1102.
\end{example}

\section{Appendix}
\begin{proof}[Proof of Theorem \ref{thm1}]
\hspace{1pt}
\paragraph{\textit{Step 1}}
We prove that the sequence $(R_i)_{i\in\mathbb{N}}$ converges to 0 a. s.\\
Denote $a:=\vert f(z_0)\vert>0$. By (\ref{EQ decr f}),
\[
\vert f(z_i)\vert\leq aC^i,
\]
hence $R_{i+1}\leq\frac{R_i}{2}+akC^i$.\\
The sequence $(R_i)_{i\in\mathbb{N}}$ is now compared to the sequence $(x_i)_{i\in\mathbb{N}}$ defined by
\[
x_{i+1}=\frac{x_i}{2}+akC^i.
\]
It holds
\begin{align}
x_n&=\frac{x_0}{2^n}+\frac{ak}{2^{n-1}}+\frac{akC}{2^{n-2}}+\frac{akC^2}{%
2^{n-3}}+\ldots+\frac{akC^{n-2}}{2^1}+akC^{n-1}  \notag \\
&=\frac{x_0}{2^n}+akC^{n-1}\sum_{j=0}^{n-1}\left(\frac{1}{2C}\right)^j.
\label{EQ xn}
\end{align}
When $C>1/2$, it follows that $x_n$ given in (\ref{EQ xn}) tends to 0 as $n\rightarrow\infty$.\\
Since the generic term of $(R_n)_{n\in\mathbb{N}}$ satisfies
\begin{equation}  \label{EQ_R}
R_n\leq\frac{R_0}{2^n}+akC^{n-1}\sum_{j=0}^{n-1}\left(\frac{1}{2C}\right)^j,
\end{equation}
where the RHS is $x_n$, it follows that $R_n$ tends to 0 as $n\rightarrow\infty$.

\hspace{1pt}
\paragraph{\textit{Step 2}}
Assume at present that $(z_n)_{n\in\mathbb{N}}$ is an a. s. convergent sequence, and denote $l$ its limit. We prove that $l$ belongs to $S$. Indeed by (\ref{EQ recur}), writing $u_n=v_n(\frac{R_n}{2}+k\left\vert f(z_n)\right\vert)$ for $v_n$ uniformly distributed on $\mathcal{B}(0,1)$, the unit ball in $\mathbb{R}^d$. Going to the limit in (\ref{EQ recur}), $l=l+\underset{n\rightarrow\infty}{\lim}u_n$. It follows that $\underset{n\rightarrow\infty}{\lim}\frac{R_n}{2}+k\vert f(z_n)\vert=0$. Since $\underset{n\rightarrow\infty}{\lim}R_n=0$, it holds
\[
\underset{n\rightarrow\infty}{\lim}\vert f(z_n)\vert=0\hbox{ a. s}.
\]
By continuity of $f$, it follows that $\underset{n\rightarrow\infty}{\lim}\vert f(z_n)\vert=f(l)$ and then $f(l)=0$. We have proved that $l\in S$.\\
It remains to prove that $(z_n)_{n\in\mathbb{N}}$ converges, showing that it is a Cauchy sequence.\\
Let $(m,n)\in\mathbb{N}^2, m>n$. Then 
\begin{align*}
\sup_{m>n}\vert\vert
z_m-z_n\vert\vert&\leq\sup_{m>n}\sum_{j=n+1}^m\vert\vert
z_j-z_{j-1}\vert\vert \\
&\leq\sup_{m>n}\sum_{j=n+1}^m r_j.
\end{align*}
By (\ref{EQ_R}),
\begin{align*}
\sup_{m>n}\vert\vert z_m-z_n\vert\vert&\leq\sup_{m>n}\sum_{j=n+1}^m\left(%
\frac{r_0}{2^j}+akC^{j-1}\left(\frac{2C-\left(\frac{1}{2C}\right)^{j-1}}{2C-1%
}\right)\right) \\
&\leq\sup_{m>n}\left(\frac{r_0\left(1-\left(\frac{1}{2}\right)^{m-n}\right)}{%
2^{n}}+\frac{2akC^{n+1}}{2C-1}\times\frac{1-C^{m-n}}{1-C}-\frac{ak}{(2C-1)2^n%
}\times\frac{1-\left(\frac{1}{2}\right)^{m-n}}{\frac{1}{2}}\right),
\end{align*}
with $0<2C-1<1$. Since $m>n$ and $C<1$
\begin{equation*}
\sup_{m>n}\vert\vert z_m-z_n\vert\vert\leq\frac{r_0}{2^{n+1}}+\frac{2akC^{n+1}}{(2C-1)(1-C)}-\frac{ak}{(2C-1)2^{n-1}}
\end{equation*}
and therefore
\begin{equation}
\lim_{n\rightarrow\infty}\sup_{m>n}\vert\vert z_m-z_n\vert\vert=0,
\end{equation}
which proves the claim.

\end{proof}
\begin{proof}[Proof of Theorem \ref{thm2}]
By (\ref{EQ_E0}), we have $E_0=\lbrace z : k_1\vert f(z_0)\vert\leq\vert\vert z-z_0\vert\vert\leq \frac{R_0}{2}+k\vert f(z_0)\vert\lbrace$, with $\epsilon_0=\vert\vert x-z_0\vert\vert$. We have to prove that $E_0\cap A_1\not=\emptyset$.\\
By (\ref{EQ A1}) and since $E_0\subset B$, this is equivalent to prove that $\mathcal{B}(x,\epsilon_0)\cap E_0\not=\emptyset$. By the definition of $E_0$ which is an annulus centred on $z_0$ with a minimal radius of $2\epsilon_0$ and since $z_0\in\partial\mathcal{B}(x,\epsilon_0)$ according to the definition of $\epsilon_0$, $\mathcal{B}(x,\epsilon_0)\cap E_0\not=\emptyset$ and so $E_0\cap E_1\not=\emptyset$.\\
Let $z_1\in A_1\cap E_0$. we prove that $z_1$ satisfies (\ref{EQ decr f}).\\
By condition \ref{cond3}, it follows
\[
\vert f(z_0)\vert-\vert f(z_1)\vert\geq mk_1\vert f(z_0)\vert,
\]
since $z_1\in E_0$. This is equivalent to
\[
\vert f(z_1)\vert\leq(1-mk_1)\vert f(z_0)\vert
\]
With an arbitrary $k_1$ close to 0 such that $0<mk_1<\frac{1}{2}$. Getting $C=(1-mk_1)\in[\frac{1}{2},1]$, we have $\vert
f(z_1)\vert\leq C\vert f(z_0)\vert$ for $z_1\in E_0$. Thus $z_1\in A_1\cap\lbrace z_1, \vert f(z_1)\vert\leq C\vert f(z_0)\vert\rbrace$ and $z_0$ can have an offspring.\\
Iterating the above argument we can construct a sequence of balls $\mathcal{B}(x, \epsilon_i)$ with lower bounded and decreasing sequence of radius. Thus this sequence converges to some limit. By Theorem \ref{thm1}, $\underset{i\rightarrow\infty}{\lim} zi=x^*\in S$.\\
We show that $x^*=x$ by contradiction.\\
If $x^*\not=x$, thus there exists $i\in\mathbb{N}$ such that $x\not\in\mathcal{B}(x^*,\vert\vert x^*-z_i\vert\vert)$. But $z$ is simulated around $x$ with decreasing radius to 0. Hence is the contradiction. Thus $x^*=x$ and we have proved Theorem \ref{thm2}. 
\end{proof}

\bibliographystyle{plain}
\bibliography{COMET_chaine_biblio}

\end{document}